\newcommand{\Nat}{\mathbb{N}}
\newcommand{\lY}{$\lambda Y$}
\newcommand{\BT}{\mathit{BT}}
\newcommand{\br}{\mathsf{br}}
\newcommand{\arr}{{\to}}
\newcommand{\bred}{\to_\beta}
\newcommand{\restr}{{\restriction}}
\newcommand{\ord}{\mathit{ord}}
\newcommand{\Split}{\mathit{Split}}
\newcommand{\Comp}{\mathit{Comp}}
\newcommand{\palka}{\mid}
\newcommand{\palkaC}{;\allowbreak}
\newcommand{\AppR}{\TirName{(\!@\!)}\xspace}
\newcommand{\LamR}{\TirName{($\lambda$)}\xspace}
\newcommand{\ConR}{\TirName{(Con)}\xspace}
\newcommand{\VarR}{\TirName{(Var)}\xspace}
\newcommand{\Gammae}{\varepsilon}
\newcommand{\Dd}{\mathcal{D}}
\newcommand{\Gg}{\mathcal{G}}
\newcommand{\Ll}{\mathcal{L}}
\newcommand{\Pp}{\mathcal{P}}
\newcommand{\Tt}{\mathcal{T}}
\newcommand{\Ttrip}{\mathcal{F}}
\newtheorem{theorem}{Theorem}
\newtheorem{lemma}[theorem]{Lemma}
\theoremstyle{definition}
\newtheorem{example}{Example}
\title{Intersection Types and Counting%
	\thanks{Work supported by the National Science Center (decision DEC-2012/07/D/ST6/02443).
	A full version of this paper is available \cite{itrs-arxiv}}}
\author{Paweł Parys
	\institute{University of Warsaw, Poland}
	\email{parys@mimuw.edu.pl}
}
\begin{document}
\maketitle

\begin{abstract}
	We present a new approach to the following meta-problem: given a quantitative property of trees,
	design a type system such that the desired property for the tree generated by an infinitary ground $\lambda$-term corresponds to some property of a derivation of a type for this $\lambda$-term, in this type system.

	Our approach is presented in the particular case of the language finiteness problem for nondeterministic higher-order recursion schemes (HORSes):
	given a nondeterministic HORS, decide whether the set of all finite trees generated by this HORS is finite.
	We give a type system such that the HORS can generate a tree of an arbitrarily large finite size if and only if in the type system we can obtain derivations that are arbitrarily large, in an appropriate sense;
	the latter condition can be easily decided.
\end{abstract}

\section{Introduction}
	In this paper we consider \lY-calculus, which is an extension of the simply typed $\lambda$-calculus by a fixed-point operator $Y$. 
	A term $P$ of \lY-calculus that is of sort\footnote{%
		We use the word ``sort'' instead of the usual ``type'' to avoid confusion with intersection types introduced in this paper.}
	$o$ can be used to generate an infinite tree $\BT(P)$, called the B\"ohm tree of $P$.
	Trees generated by terms of \lY-calculus can be used to faithfully represent the control flow of programs in languages with higher-order functions.
	Traditionally, Higher Order Recursive Schemes (HORSes) are used for this purpose~\cite{Damm82,KNU-hopda,Ong-hoschemes,kobayashiOng2009type}; this formalism is equivalent to \lY-calculus,
	and the translation between them is rather straightforward~\cite{MSC:9613738}.
	Collapsible Pushdown Systems \cite{collapsible} and Ordered Tree-Pushdown Systems \cite{DBLP:conf/fsttcs/ClementePSW15} are other equivalent formalisms.

	Intersection type systems were intensively used in the context of HORSes, for several purposes like
	model-checking \cite{kobayashi2009types-popl,kobayashiOng2009type,DBLP:conf/csl/BroadbentK13,DBLP:conf/popl/RamsayNO14},
	pumping \cite{koba-pumping},
	transformations of HORSes \cite{context-sensitive-2,downward-closure}, etc.
	Interestingly, constructions very similar to intersection types were used also on the side of collapsible pushdown systems; they were alternating stack automata \cite{saturation}, and types of stacks \cite{ho-new,Kar-Par-pumping}.

	In this paper we show how intersection types can be used for deciding quantitative properties of trees generated by \lY-terms.
	We concentrate on the language finiteness problem for nondeterministic HORSes:
	given a nondeterministic HORS, decide whether the set of all finite trees generated by this HORS is finite.
	
	This problem can be restated in the world of \lY-terms (or standard, deterministic HORSes), generating a single infinite tree.
	Here, instead of resolving nondeterministic choices during the generation process, we leave them in the resulting tree.
	Those nondeterministic choices are denoted by a distinguished $\br$ (``branch'') symbol, below which we put options that could be chosen.
	Then to obtain a finite tree generated by the original HORS we just need to recursively choose in every $\br$-labeled node which of the two subtrees we want to consider.
	Thus, in this setting, the language finiteness problem asks whether the set of all finite trees obtained this way is finite.
	
	The difficulty of this problem lies in the fact that sometimes the same finite tree may be found in infinitely many different places of $\BT(P)$ (i.e., generated by a nondeterministic HORS in many ways);
	thus the actual property to decide is whether there is a common bound on the size of each of these trees.
	This makes the problem inaccessible for standard methods used for analyzing HORSes, as they usually concern only regular properties of the B\"ohm tree, while boundedness is a problem of different kind.
	The same difficulty was observed in \cite{koba-pumping}, where they prove a pumping lemma for deterministic HORSes, while admitting (Remark 2.2) that their method is too weak to reason about nondeterministic HORSes.

	In order to solve the language finiteness problem, we present an appropriate intersection type system, where derivations are annotated by flags and markers of multiple kinds.
	The key property of this type system is that the number of flags in a type derivation for a \lY-term $P$ 
	approximates the size of some finite tree obtained by resolving nondeterministic choices in the infinite tree $\BT(P)$.
	In consequence, there are type derivations using arbitrarily many flags if, and only if, the answer to the language finiteness problem is ``no''.

	The language finiteness problem was first attacked in \cite{achim-pumping} (for safe HORSes only), but their algorithm turned out to be incorrect \cite{achim-erratum}.
	To our knowledge, the only known solution of this problem follows from a recent decidability result for the diagonal problem \cite{diagonal-safe, downward-closure}.
	This problem asks, given a nondeterministic HORS and a set of letters $\Sigma$, whether for every $n\in\Nat$ the HORS generates a finite tree in which every letter from $\Sigma$ appears at least $n$ times.
	Clearly, a nondeterministic HORS generates arbitrarily large trees exactly when for some letter $a$ it generates trees having arbitrarily many $a$ letters, i.e., when the answer to the diagonal problem for $\Sigma=\{a\}$ is ``yes''.

	Our type system is, to some extent, motivated by the algorithm of \cite{downward-closure} solving the diagonal problem.
	This algorithm works by repeating two kinds of transformations of HORSes.
	The first of them turns the HORS into a HORS generating trees having only a fixed number of branches, one per each letter from $\Sigma$ (i.e., one branch in our case of $|\Sigma|=1$).
	The branches are chosen nondeterministically out of some tree generated by the original HORS; for every $a\in\Sigma$ there is a choice witnessing that $a$ appeared many times in the original tree.	
	Then such a HORS of the special form is turned into a HORS that is of order lower by one,
	and generates trees having the same nodes as trees generated by the original HORS, but arranged differently (in particular, the new trees may have again arbitrarily many branches).
	After finitely many repetitions of this procedure, a HORS of order $0$ is obtained, and the diagonal problem becomes easily decidable.
	In some sense we want to do the same, but instead of applying all these transformations one by one, we simulate all of them simultaneously in a single type derivation.
	In this derivation, for each order $n$, we allow to place arbitrarily one marker ``of order $n$''; this corresponds to the nondeterministic choice of one branch in the $n$-th step of the previous algorithm.
	We also place some flags ``of order $n$'', in places that correspond to nodes remaining after the $n$-th step of the previous algorithm.
	
	The idea of using intersection types for counting is not completely new.
	Paper \cite{jfp-numerals} presents a type system that, essentially, allows to estimate the size of the $\beta$-normal form of a $\lambda$-term just by looking at (the number of some flags in) a derivation of a type for this term.
	A similar idea, but for higher-order pushdown automata, is present in \cite{ho-new}, where we can estimate the number of $\sharp$ symbols appearing on a particular, deterministically chosen branch of the generated tree.
	This previous approach also uses intersection types, where the derivations are marked with just one kind of flags, denoting ``productive'' places of a $\lambda$-term
	(oppositely to our approach, where we have different flags for different orders, and we also have markers).
	The trouble with the ``one-flag'' approach is that it works well only in a completely deterministic setting, where looking independently at each node of the B\"ohm tree we know how it contributes to the result;
	the method stops working (or at least we do not know how to prove that it works) in our situation, where we first nondeterministically perform some guesses in the B\"ohm tree, and only after that we want to count something that depends on the chosen values.

	\paragraph{Acknowledgements.} 
	I would like to thank Szymon Toruńczyk for stimulating discussions, and anonymous reviewers for useful comments.

\section{Preliminaries}

	\paragraph{Trees.}
	Let $\Sigma$ be a \emph{ranked alphabet}, i.e., a set of symbols together with a rank function assigning a nonnegative integer to each of the symbols.
	We assume that $\Sigma$ contains a distinguished symbol $\br$ of rank $2$, used to denote nondeterministic choices.
	A \emph{$\Sigma$-labeled} tree is a tree that is rooted (there is a distinguished root node),
	node-labeled (every node has a label from $\Sigma$), 
	ranked (a node with label of rank $n$ has exactly $n$ children), 
	and ordered (children of a node of rank $n$ are numbered from $1$ to $n$).

	When $t$ is a $\Sigma$-labeled tree $t$, by $\Ll(t)$ we denote the set of all finite trees that can be obtaining by choosing in every $\br$-labeled node of $t$ which of the two subtrees we want to consider.
        More formally, we consider the following relation $\to_\br$: we have $t\to_\br u$ if $u$ can be obtained from $t$ by choosing in $t$ a $\br$-labeled node $x$ and its child $y$, 
        and replacing the subtree starting in $x$ by the subtree starting in $y$ (which removes $x$ and the other subtree of $x$).
        Let $\to_\br^*$ be the reflexive transitive closure of $\to_\br$.
        Then $\Ll(t)$ contains all trees $u$ that do not use the $\br$ label, are finite, and such that $t\to_\br^*u$.

	\paragraph{Infinitary $\lambda$-calculus.}
	The set of \emph{sorts} (a.k.a.~simple types), constructed from a unique basic sort $o$ using a binary operation $\arr$, is defined as usual.
	The order of a sort is defined by: $\ord(o)=0$, and $\ord(\alpha\arr\beta)=\max(1+\ord(\alpha),\ord(\beta))$.
	
	We consider infinitary, sorted $\lambda$-calculus. 
	\emph{Infinitary $\lambda$-terms} (or just \emph{$\lambda$-terms}) are defined by coinduction, according to the following rules:
	\begin{itemize}
		\item	if $a\in\Sigma$ is a symbol of rank $r$, and $P_1^o,\dots,P_r^o$ are $\lambda$-terms, then $(a\,P_1^o\,\dots\,P_r^o)^o$ is a $\lambda$-term,
		\item	for every sort $\alpha$ there are infinitely many variables $x^\alpha,y^\alpha,z^\alpha,\dots$; each of them is a $\lambda$-term,
		\item	if $P^{\alpha\arr\beta}$ and $Q^\alpha$ are $\lambda$-terms, then $(P^{\alpha\arr\beta}\,Q^\alpha)^\beta$ is a $\lambda$-term, and 
		\item	if $P^\beta$ is a $\lambda$-term and $x^\alpha$ is a variable, then $(\lambda x^\alpha.P^\beta)^{\alpha\arr\beta}$ is a $\lambda$-term.
	\end{itemize}
	We naturally identify $\lambda$-terms differing only in names of bound variables.
	We often omit the sort annotations of $\lambda$-terms, but we keep in mind that every $\lambda$-term (and every variable) has a particular sort.
	A $\lambda$-term $P$ is \emph{closed} if it has no free variables.
	Notice that, for technical convenience, a symbol of positive rank is not a $\lambda$-term itself, but always comes with arguments.
	This is not a restriction, since e.g.~instead of a unary symbol $a$ one may use the term $\lambda x.a\,x$.

	The order of a $\lambda$-term is just the order of its sort.
	The \emph{complexity} of a $\lambda$-term $P$ is the smallest number $m$ such that the order of every subterm of $P$ is at most $m$.
	We restrict ourselves to $\lambda$-terms that have finite complexity.
	
	A $\beta$-reduction is defined as usual.
	We say that a $\beta$-reduction $P\bred Q$ \emph{is of order $n$} if it concerns a redex $(\lambda x.R)\,S$ such that $\ord(\lambda x.R)=n$.
	In this situation the order of $x$ is at most $n-1$, but may be smaller (when other arguments of $R$ are of order $n-1$).

	\paragraph{B\"ohm Trees.}

	We consider B\"ohm trees only for closed $\lambda$-terms of sort $o$.
	For such a term $P$, its \emph{B\"ohm tree} $\BT(P)$ is constructed by coinduction, as follows:
	if there is a sequence of $\beta$-reductions from $P$ to a $\lambda$-term of the form $a\,P_1\,\ldots\,P_r$ (where $a$ is a symbol),
	then the root of the tree $t$ has label $a$ and $r$ children, and the subtree starting in the $i$-th child is $\BT(P_i)$.
	If there is no sequence of $\beta$-reductions from $P$ to a $\lambda$-term of the above form, then $\BT(P)$ is the full binary tree with all nodes labeled by $\br$.\footnote{%
		Usually one uses a special label $\bot$ of rank $0$ for this purpose, but from the perspective of our problem both definitions are equivalent.}
	By $\Ll(P)$ we denote $\Ll(\BT(P))$.

	\paragraph{\lY-calculus.}

	The syntax of \lY-calculus is the same as that of finite $\lambda$-calculus, extended by symbols $Y^{(\alpha\arr\alpha)\arr\alpha}$, for each sort $\alpha$.
	A term of \lY-calculus is seen as a term of infinitary $\lambda$-calculus 
	if we replace each symbol $Y^{(\alpha\arr\alpha)\arr\alpha}$ by the unique infinite $\lambda$-term $Z$ such that $Z$ is syntactically the same as $\lambda x^{\alpha\arr\alpha}.x\,(Z\,x)$.
	In this way, we view \lY-calculus as a fragment of infinitary $\lambda$-calculus.
 
	It is standard to convert a nondeterministic HORS $\Gg$ into a closed \lY-term $P^o$ such that $\Ll(P)$ is exactly the set of all finite trees generated by $\Gg$.
	The following theorem, which is our main result, states that the \emph{language finiteness problem} is decidable.
	
	\begin{theorem}\label{thm:main}
		Given a closed \lY-term $P$ of sort $o$, one can decide whether $\Ll(P)$ is finite.
	\end{theorem}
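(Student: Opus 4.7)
The plan is to reduce the language finiteness problem to a boundedness question about derivations in a custom intersection type system, as advertised in the introduction, and then to solve that boundedness question by a finite-state argument on type derivations. First I would design intersection types $\tau$ for every sort, where the types at sort $o$ are ground, and types at higher sorts are intersections of argument-types arrowed into a result-type; the derivations $\Gamma \vdash P : \tau$ would be annotated with two kinds of decorations, one called \emph{flags} and one called \emph{markers}, indexed by an \emph{order} $n \le \mathit{compl}(P)$. Markers of order $n$ would be placed at most once per derivation at order-$n$ application nodes and would designate a distinguished ``chosen branch'' in the Böhm tree at that order, while flags of order $n$ would be placed on occurrences of $\br$ and of $\Sigma$-symbols whose contribution ``survives'' at level $n$, mirroring the level-by-level simplification of HORSes used in the diagonal-problem algorithm of \cite{downward-closure}. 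The typing rules \BrR, \AppR, \LamR, \ConR, \VarR would propagate these decorations additively: the total flag count of a derivation for $P : o$ should equal the size of a specific finite tree in $\Ll(P)$ extracted from the chosen markers.

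Second, I would prove the two halves of the key correspondence. For soundness, given a derivation $\Dd$ of $\Gammae \vdash P : o$ with $k$ flags, I would traverse $\Dd$ and read off a reduction $P \bred^* a\,P_1\,\dots\,P_r$ at each $\Sigma$-labeled node; the marker data at each order tells us which branch of each encountered $\br$ to follow, producing a finite tree $u \in \Ll(P)$ whose size is at least $k$ (up to a constant depending on the maximal order). The proof would proceed by an induction that commutes typing with $\beta$-reduction, in the spirit of standard subject-reduction/expansion for intersection types, where the crucial step handles order-$n$ $\beta$-reductions by rearranging markers and flags between the abstraction and its argument. For completeness, given a tree $u \in \Ll(P)$ of size $k$, I would build a derivation with at least (roughly) $k$ flags by induction on the reduction $P \bred^* a\,P_1\,\dots\,P_r$ witnessing a particular node of $u$, placing markers at the nondeterministic choices and flags on the nodes of $u$.

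Third, I would reduce the decidability question to boundedness of an effectively computable function on a finite graph. Because the set of types of each sort is finite and because markers are globally limited (at most one per order), the space of ``type judgement shapes'' for subterms of $P$ is finite; the question ``does $P$ admit derivations with arbitrarily many flags?'' becomes a question about the existence of a productive cycle in the derivation-generating graph obtained from the finitely many $Y$-unfoldings needed to type $P$. This is a standard pumping argument: either all derivations have flag count bounded by some polynomial in the size of the type set, or a cycle contributing a nonzero flag can be iterated, giving unbounded derivations. Both cases can be checked algorithmically.

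The main obstacle will be the design of the flag/marker rules so that Paragraph~2 actually works: one has to ensure that flags of different orders interact correctly across $\beta$-reduction, in particular that an order-$n$ redex neither duplicates nor loses flags of order $\ge n$, while being allowed to duplicate flags of order $< n$ according to the intersection refinements of the argument. Getting this accounting right, and verifying that markers propagate to extract a single coherent finite tree rather than incompatible fragments, is where the bulk of the technical work lies; the decidability step and the basic well-founded induction for soundness are comparatively routine once the type system is correctly set up.
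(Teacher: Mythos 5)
Your overall architecture coincides with the paper's: an intersection type system decorated with order-indexed flags and markers, a soundness/completeness correspondence proved by commuting typing with $\beta$-reduction order by order, and a finite pumping argument on type judgments for decidability. However, there is a genuine gap at the heart of your plan: the invariant you propose to establish --- that the total flag count of a derivation equals (or is within a constant of) the size of a tree in $\Ll(P)$ --- is not achievable by a system of this kind, and the paper's own examples show why. A single symbol occurrence in a term of complexity $m$ can produce unboundedly many symbol occurrences after reduction, so low-order flags over-count: for the term $P_4$ of the last example one gets derivations with arbitrarily many order-$0$ and order-$1$ flags although the only tree in $\Ll(P_4)$ is a single node. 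Hence your soundness claim (``a tree $u\in\Ll(P)$ whose size is at least $k$ up to a constant'') is false if $k$ is the total number of flags. The paper's way out is precisely the machinery you defer as ``the main obstacle'': at most one marker of each order placed anywhere in the derivation, flags of order $n$ created only where information about an order-$(n-1)$ flag meets the path to the order-$(n-1)$ marker (the $\Comp_m$ predicate, plus the order-threshold split of flag/marker propagation in the \AppR rule), and a flag counter that counts \emph{only} flags of the maximal order $m$. The achievable correspondence is then a tower of logarithms ($c\geq\log_2^m|t|$ in one direction, $|t|\geq c$ in the other), which suffices for finiteness but is far from your claimed linear relation; insisting on the latter would leave the completeness direction unprovable.

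A second, smaller gap is in your decidability step. Derivations here are trees whose \AppR nodes may a priori have unboundedly many premisses (one per element of $I$), so ``finitely many type judgment shapes'' does not immediately give a finite graph on which to pump. The paper needs an extra fact (Lemma~\ref{lem:zero-when-no-marker}: if no order-$(m-1)$ marker is visible and the order of the term is below $m$, the flag counter is $0$) to argue that equivalent duplicate premisses of an \AppR rule carry counter $0$ and can be removed without disturbing any counters; only then is the degree bounded, and only then does a long branch with a strictly increasing counter exist whenever $c$ is large, which is what makes the cycle-iteration argument (via the additivity property of the system) go through. Your sketch treats this as a standard finite-state pumping argument, but without the zero-counter lemma the bounded-degree normalization, and hence the ``bounded or pumpable'' dichotomy, does not follow.
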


\section{Intersection Type System}

	In this section we introduce a type system that allows to determine the desired property: whether in $\Ll(P)$ there is an arbitrarily large tree.

	\paragraph{Intuitions.}\label{para:intuitions}

	The main novelty of our type system is in using flags and markers, which may label nodes of derivation trees.
	To every flag and marker we assign a number, called an order.
	While deriving a type for a $\lambda$-term of complexity $m$, we may place in every derivation tree at most one marker of each order $n\in\{0,\dots,m-1\}$, and arbitrarily many flags of each order $n\in\{0,\dots,m\}$.
	
	Consider first a $\lambda$-term $M_0$ of complexity $0$.
	Such a term actually equals its B\"ohm tree.
	Our aim is to describe some finite tree $t$ in $\Ll(M_0)$, i.e., obtained from $M_0$ by resolving nondeterministic choices in some way.
	We thus just put flags of order $0$ in all those (appearances of) symbols in $M_0$ that contribute to this tree $t$;
	the type system ensures that indeed all symbols of some finite tree in $\Ll(M_0)$ are labeled by a flag.
	Then clearly we have the desired property that there is a derivation with arbitrarily many flags if, and only if, there are arbitrarily large trees in $\Ll(M_0)$.
	
	Next, consider a $\lambda$-term $M_1$ that is of complexity $1$, and reduces to $M_0$.
	Of course every finite tree from $\Ll(M_0)$ is composed of symbols appearing already in $M_1$;
	we can thus already in $M_1$ label (by order-$0$ flags) all symbols that contribute to some tree $t\in\Ll(M_0)$ (and an intersection type system can easily check correctness of such labeling).
	There is, however, one problem: a single appearance of a symbol in $M_1$ may result in many appearances in $M_0$ (since a function may use its argument many times).
	Due to this, the number of order-$0$ flags in $M_1$ does not correspond to the size of $t$.
	We rescue ourselves in the following way.
	In $t$ we choose one leaf, we label it by an order-$0$ marker, and on the path leading from the root to this marker we place order-$1$ flags.
	On the one hand, $\Ll(M_0)$ contains arbitrarily large trees if, and only if, it contains trees with arbitrarily long paths, i.e., trees with arbitrarily many order-$1$ flags.
	On the other hand, we can perform the whole labeling (and the type system can check its correctness) already in $M_1$, and the number of order-$1$ flags in $M_1$ will be precisely the same as it would be in $M_0$.
	Indeed, in $M_1$ we have only order-$1$ functions, i.e., functions that take trees and use them as subtrees of larger trees;
	although a tree coming as an argument may be duplicated, the order-$0$ marker can be placed in at most one copy.
	This means that, while reducing $M_1$ to $M_0$, every symbol of $M_1$ can result in at most one symbol of $M_0$ lying on the selected path to the order-$0$ marker
	(beside of arbitrarily many symbols outside of this path).
	
	This procedure can be repeated for $M_2$ of complexity $2$ that reduces to $M_1$ via $\beta$-reductions of order $2$ (and so on for higher orders).
	We now place a marker of order $1$ in some leaf of $M_1$;
	afterwards, we place an order-$2$ flag in every node that is on the path to the marked leaf, and that has a child outside of this path whose some descendant is labeled by an order-$1$ flag.
	In effect, for some choice of a leaf to be marked, the number of order-$2$ flags approximates the number of order-$1$ flags, up to logarithm.
	Moreover, the whole labeling can be done in $M_2$ instead of in $M_1$, without changing the number of order-$2$ flags.
	
	In this intuitive description we have talked about labeling ``nodes of a $\lambda$-term'', but formally we label nodes of a derivation tree that derives a type for the term, in our type system.
	Every such node contains a type judgment for some subterm of the term.

	\paragraph{Type Judgments.}
	
	For every sort $\alpha$ we define the set $\Tt^\alpha$ of \emph{types} of sort $\alpha$,
	and the set $\Ttrip^\alpha$ of \emph{full types} of sort $\alpha$.
	This is done as follows, where $\Pp$ denotes the powerset:
	\begin{align*}
		&\Tt^{\alpha\arr\beta}=\Pp(\Ttrip_{\ord(\alpha\arr\beta)}^\alpha)\times\Tt^\beta\,,\qquad
		\Tt^o=o\,,\\
		&\Ttrip_k^\alpha=\{(k,F,M,\tau)\mid F,M\subseteq\{0,\dots,k-1\},\,F\cap M=\emptyset,\,\tau\in\Tt^\alpha\}\,,\qquad\Ttrip^\alpha=\bigcup_{k\in\Nat}\Ttrip_k^\alpha\,.
	\end{align*}
	Notice that the sets $\Tt^\alpha$ and $\Ttrip_k^\alpha$ are finite (unlike $\Ttrip^\alpha$).
	A type $(T,\tau)\in\Tt^{\alpha\arr\beta}$ is denoted as $T\arr\tau$.
	A full type $\hat\tau=(k,F,M,\tau)\in\Ttrip_k^\alpha$ consists of its order $k$, a set $F$ of flag orders, a set $M$ of marker orders, and a type $\tau$;
	we write $\ord(\hat\tau)=k$.
	In order to distinguish types from full types, the latter are denoted by letters with a hat, like $\hat\tau$.
	
	A \emph{type judgment} is of the form $\Gamma\vdash P:\hat\tau\triangleright c$, where $\Gamma$, called a \emph{type environment}, 
	is a function that maps every variable $x^\alpha$ to a subset of $\Ttrip^\alpha$,
	$P$ is a $\lambda$-term, $\hat\tau$ is a full type of the same sort as $P$ (i.e., $\hat\tau\in\Ttrip^\beta$ when $P$ is of sort $\beta$), and $c\in\Nat$.
	
	As usual for intersection types, the intuitive meaning of a type $T\arr\tau$ is that a $\lambda$-term having this type can return a $\lambda$-term having type $\tau$, while taking an argument for which we can derive all full types from $T$.
	Moreover, in $\Tt^o$ there is just one type $o$, which can be assigned to every $\lambda$-term of sort $o$.
	Suppose that we have derived a type judgment $\Gamma\vdash P:\hat\tau\triangleright c$ with $\hat\tau=(m,F,M,\tau)$.
	Then
	\begin{itemize}
	\item	$\tau$ is the type derived for $P$;
	\item	$\Gamma$ contains full types that could be used for free variables of $P$ in the derivation;
	\item	$m$ bounds the order of flags and markers that could be used in the derivation: flags could be of order at most $m$, and markers of order at most $m-1$;
	\item	$M\subseteq\{0,\dots,m-1\}$ contains the orders of markers used in the derivation, together with those provided by free variables 
		(i.e., we imagine that some derivations, specified by the type environment, are already substituted in our derivation for free variables);
		we, however, do not include markers provided by arguments of the term (i.e., coming from the sets $T_i$ when $\tau=T_1\arr\dots\arr T_k\arr o$);
	\item	$F$ contains those numbers $n\in\{0,\dots,m-1\}$ (excluding $n=m$) for which a flag of order $n$ is placed in the derivation itself, or provided by a free variable, or provided by an argument;
		for technical convenience we, however, remove $n$ from $F$ whenever $n\in M$ 
		(when $n\in M$, the information about order-$n$ flags results in placing an order-$(n+1)$ flag, and need not to be further propagated);
	\item	$c$, called a \emph{flag counter}, counts the number of order-$m$ flags present in the derivation.
	\end{itemize}

	\paragraph{Type System.}
	
	Before giving rules of the type system, we need a few definitions.
	We use the symbol $\uplus$ to denote disjoint union.
	When $A\subseteq\Nat$ and $n\in\Nat$, we write $A\restr_{<n}$ for $\{k\in A\mid k<n\}$, and similarly $A\restr_{\geq n}$ for $\{k\in A\mid k\geq n\}$.
	By $\Gammae$ we denote the type environment mapping every variable to $\emptyset$,
	and by $\Gamma[x\mapsto T]$ the type environment mapping $x$ to $T$ and every other variable $y$ to $\Gamma(y)$.

	Let us now say how a type environment $\Gamma$ from the conclusion of a rule may be split into type environments $(\Gamma_i)_{i\in I}$ used in premisses of the rule:
	we say that $\Split(\Gamma\palka(\Gamma_i)_{i\in I})$ holds if and only if for every variable $x$ it holds $\Gamma_i(x)\subseteq\Gamma(x)$ for every $i\in I$,
	and every full type from $\Gamma(x)$ providing some markers (i.e., $(k,F,M,\tau)$ with $M\neq\emptyset$) appears in some $\Gamma_i(x)$.
	Full types with empty $M$ may be discarded and duplicated freely.
	This definition forbids to discard full types with nonempty $M$, and from elsewhere it will follow that they cannot be duplicated.
	As a special case $\Split(\Gamma\palka\Gamma')$ describes how a type environment can be weakened.

	All type derivations are assumed to be finite
	(although we derive types mostly for infinite $\lambda$-terms, each type derivation analyzes only a finite part of a term).
	Rules of the type system will guarantee that the order $m$ of derived full types will be the same in the whole derivation (although in type environments there may be full types of different orders).
	
	We are ready to give the first three rules of our type system:
	\begin{mathpar}
		\inferrule*[right=(Br)]{
			\Gamma\vdash P_i:\hat\tau\triangleright c
			\\
			i\in\{1,2\}
			}
			{\Gamma\vdash \br\,P_1\,P_2:\hat\tau\triangleright c}
		\and
		\inferrule*[right=(Var)]{
			\Split(\Gamma\palka\Gammae[x\mapsto\{(k,F,M',\tau)\}])
			\\
			M\restr_{<k}=M'
			}
			{\Gamma\vdash x:(m,F,M,\tau)\triangleright 0} 
	\end{mathpar}
	\begin{mathpar}
		\inferrule*[right=($\lambda$)]{\Gamma'[x\mapsto T]\vdash P:(m,F,M,\tau)\triangleright c
			\\
			\Split(\Gamma\palka\Gamma')
			\\
			\Gamma'(x)=\emptyset}
	        	{\Gamma\vdash\lambda x.P:(m,F,M\setminus\bigcup{}_{(k,F',M',\sigma)\in T}M',T\arr\tau)\triangleright c}
	\end{mathpar}
	
	We see that to derive a type for the nondeterministic choice $\br\,P_1\,P_2$, we need to derive it either for $P_1$ or for $P_2$.

	The \VarR rule allows to have in the resulting set $M$ some numbers that do not come from the set $M'$ assigned to $x$ by the type environment; these are the orders of markers placed in the leaf using this rule.
	Notice, however, that we allow here only orders not smaller than $k$ (which is the order of the superterm $\lambda x.P$ binding this variable $x$).
	This is consistent with the intuitive description of the type system (page \pageref{para:intuitions}), 
	which says that a marker of order $n$ can be put in a place that will be a leaf after performing all $\beta$-reductions of orders greater than $n$.
	Indeed, the variable $x$ remains a leaf after performing $\beta$-reductions of orders greater than $k$, but while performing $\beta$-reductions of order $k$ this leaf will be replaced by a subterm substituted for $x$.
	Recall also that, by definition of a type judgment, we require that $(k,F,M',\tau)\in\Ttrip^\alpha_k$ and $(m,F,M,\tau)\in\Ttrip^\alpha_m$, for appropriate sort $\alpha$;
	this introduces a bound on maximal numbers that may appear in the sets $F$ and $M$.
	
	\begin{example}\label{ex:var}
		Denoting $\hat\rho_1=(1,\emptyset,\{0\},o)$ we can derive:
		\begin{mathpar}
			\inferrule*[Right=(Var)]{ }{
				\Gammae[x\mapsto\{\hat\rho_1\}]\vdash x:(2,\emptyset,\{0\},o)\triangleright 0
			}
			\and
			\inferrule*[Right=(Var)]{ }{
				\Gammae[x\mapsto\{\hat\rho_1\}]\vdash x:(2,\emptyset,\{0,1\},o)\triangleright 0
			}
		\end{mathpar}
		In the derivation on the right, the marker of order $1$ is placed in the conclusion of the rule.
	\end{example}

	The \LamR rule allows to use (in a subderivation concerning the $\lambda$-term $P$) the variable $x$ with all full types given in the set $T$.
	When the sort of $\lambda x.P$ is $\alpha\arr\beta$, by definition of $\Tt^{\alpha\arr\beta}$ we have that all full types in $T$ have the same order $k=\ord(\alpha\arr\beta)$ (since $(T\arr\tau)\in\Tt^{\alpha\arr\beta}$).
	Recall that we intend to store in the set $M$ the markers contained in the derivation itself and those provided by free variables, but not those provided by arguments.
	Because of this, in the conclusion of the rule we remove from $M$ the markers provided by $x$.
	This operation makes sense only because there is at most one marker of each order, so markers provided by $x$ cannot be provided by any other free variable nor placed in the derivation itself.
	The set $F$, unlike $M$, stores also flags provided by arguments, so we do not need to remove anything from $F$.
	
	\begin{example}\label{ex:lambda}
		The \LamR rule can be used, e.g., in the following way (where $a$ is a symbol of rank $1$):
		\begin{mathpar}
			\inferrule*[Right=($\lambda$)]{
				\Gammae[x\mapsto\{\hat\rho_1\}]\vdash a\,x:(2,\{1\},\{0\},o)\triangleright 0
			}{
				\Gammae\vdash\lambda x.a\,x:(2,\{1\},\emptyset,\{\hat\rho_1\}\arr o)\triangleright 0
			}
			\and
			\inferrule*[Right=($\lambda$)]{
				\Gammae[x\mapsto\{\hat\rho_1\}]\vdash a\,x:(2,\emptyset,\{0,1\},o)\triangleright 1
			}{
				\Gammae\vdash\lambda x.a\,x:(2,\emptyset,\{1\},\{\hat\rho_1\}\arr o)\triangleright 1
			}
		\end{mathpar}
		Notice that in the conclusion of the rule, in both examples, we remove $0$ from the set of marker orders, because the order-$0$ marker is provided by $x$.
	\end{example}

	The next two rules use a predicate $\Comp_m$, saying how flags and markers from premisses contribute to the conclusion.
	It takes ``as input'' pairs $(F_i,c_i)$ for $i\in I$; each of them consists of the set of flag orders $F_i$ and of the flag counter $c_i$ from some premiss.
	Moreover, the predicate takes a set of marker orders $M$ from the current type judgment (it contains orders of markers used in the derivation, including those provided by free variables).
	The goal is to compute the set of flag orders $F$ and the flag counter $c$ that should be placed in the current type judgment.
	First, for each $n\in\{1,\dots,m\}$ consecutively, we decide whether a flag of order $n$ should be placed on the current type judgment.
	We follow here the rules mentioned in the intuitive description.
	Namely, we place a flag of order $n$ if we are on the path leading to the marker of order $n-1$ (i.e., if $n-1\in M$), and simultaneously we receive an information about a flag of order $n-1$.
	By receiving this information we mean that either a flag of order $n-1$ was placed on the current type judgment, or $n-1$ belongs to some set $F_i$.
	Actually, we place multiple flags of order $n$: one per each flag of order $n-1$ placed on the current type judgment, and one per each set $F_i$ containing $n-1$.
	Then, we compute $F$ and $c$.
	In $c$ we store the number of flags of the maximal order $m$: we sum all the numbers $c_i$, and we add the number of order-$m$ flags placed on the current type judgment.
	In $F$ we keep elements of all $F_i$, and we add the orders $n$ of flags that were placed on the current type judgment.
	We, however, remove from $F$ all elements of $M$.
	This is because every flag of some order $n-1$ should result in creating at most one flag of order $n$, in the closest ancestor that lies on the path leading to the marker of order $n-1$.
	If we have created an order-$n$ flag on the current type judgment, i.e., if $n-1\in M$, we do not want to do this again in the parent.

	Below we give a formal definition, in which $f_n'$ contains the number of order-$n$ flags placed on the current type judgment,
	while $f_n$ additionally counts the number of premisses for which $n\in F_i$.
	We say that $\Comp_m(M\palkaC((F_i,c_i))_{i\in I})=(F,c)$ when
	\begin{align*}
		&F=\{n\in\{0,\dots,m-1\}\mid f_n>0\land n\not\in M\}\,,&&c=f'_m+\sum_{i\in I}c_i\,,\qquad\mbox{where, for $n\in\{0,\dots,m\}$,}\\
		&f_n=f_n'+\sum_{i\in I}|F_i\cap\{n\}|,&&f_n'=\left\{\begin{array}{ll}f_{n-1}&\mbox{if }n-1\in M,\\0&\mbox{otherwise.}\end{array}\right.
	\end{align*}

	We now present a rule for constants other than $\br$:
	\begin{mathpar}
		\inferrule*[right=(Con)]{
			\Gamma_i\vdash P_i:(m,F_i,M_i,o)\triangleright c_i\mbox{ for each }i\in\{1,\dots,r\}
			\\
			M=M'\uplus M_1\uplus\dots\uplus M_r
			\\
			(m=0)\Rightarrow(F'=\emptyset\land c'=1)
			\\ 
			(m>0)\Rightarrow(F'=\{0\}\land c'=0)
			\\
			(r>0)\Rightarrow(M'=\emptyset)
			\\
			a\neq\br
			\\
			\Split(\Gamma\palka\Gamma_1,\dots,\Gamma_r)
			\\ 
			\Comp_m(M\palkaC(F',c'),(F_1,c_1),\dots,(F_r,c_r))=(F,c)
			}
			{\Gamma\vdash a\,P_1\,\dots\,P_r:(m,F,M,o)\triangleright c}
	\end{mathpar}

	Here, the conditions in the second line say that in a node using the \ConR rule we always place a flag of order $0$ (via $F'$ or via $c'$, depending on $m$), 
	and that if the node is a leaf (i.e., $r=0$), then we are allowed to place markers of arbitrary order (via $M'$).
	Then to the $\Comp_m$ predicate, beside of pairs $(F_i,c_i)$ coming from premisses, we also pass the information $(F',c')$ about the order-$0$ flag placed in the current node;
	this predicate decides whether we should place also some flags of positive orders.
	Let us emphasize that in this rule (and similarly in the next rule) we have a disjoint union $M'\uplus M_1\uplus\dots\uplus M_r$,
	which ensures that a marker of any order may be placed only in one node of a derivation.
	
	\begin{example}\label{ex:con}
		The \ConR rule may be instantiated in the following way:
		\begin{mathpar}
			\inferrule*[Right=(Con)]{
				\Gammae[x\mapsto\{\hat\rho_1\}]\vdash x:(2,\emptyset,\{0\},o)\triangleright 0
			}{
				\Gammae[x\mapsto\{\hat\rho_1\}]\vdash a\,x:(2,\{1\},\{0\},o)\triangleright 0
			}
			\and
			\inferrule*[Right=(Con)]{
				\Gammae[x\mapsto\{\hat\rho_1\}]\vdash x:(2,\emptyset,\{0,1\},o)\triangleright 0
			}{
				\Gammae[x\mapsto\{\hat\rho_1\}]\vdash a\,x:(2,\emptyset,\{0,1\},o)\triangleright 1
			}
		\end{mathpar}
		In the left example, flags of order $0$ and $1$ are placed in the conclusion of the rule 
		(a flag of order $0$ is created because we are in a constant; since the marker of order $0$ is visible, we do not put $0$ into the set of flag orders, but instead we create a flag of order $1$).
		In the right example, a marker of order $1$ is visible, which causes that this time flags of order $0$, $1$, and $2$ are placed in the conclusion of the \ConR rule
		(again, we do not put $0$ nor $1$ into the set of flag orders, because of $0$ and $1$ in the set of marker orders).
	\end{example}

	The next rule describes application:
	\begin{mathpar}
		\inferrule*[right=(\!@\!)]{
			\Gamma'\vdash P:(m,F',M',\{(\ord(P),F_i\restr_{<\ord(P)},M_i\restr_{<\ord(P)},\tau_i)\mid i\in I\}\arr\tau)\triangleright c'
			\\
			\Gamma_i\vdash Q:(m,F_i,M_i,\tau_i)\triangleright c_i\mbox{ for each }i\in I
			\\
 			M=M'\uplus\biguplus{}_{i\in I}M_i
 			\\
 			\ord(P)\leq m
 			\\
 			\Split(\Gamma\palka\Gamma',(\Gamma_i)_{i\in I})
 			\\
 			\Comp_m(M\palkaC(F',c'),((F_i\restr_{\geq\ord(P)},c_i))_{i\in I})=(F,c)
	       		}
		        {\Gamma\vdash P\,Q:(m,F,M,\tau)\triangleright c}
	\end{mathpar}

	In this rule, it is allowed (but in fact useless) that for two different $i\in I$ the full types $(m,F_i,M_i,\tau_i)$ are equal.
	It is also allowed that $I=\emptyset$, in which case no type needs to be derived for $Q$.
	Observe how flags and markers coming from premisses concerning $Q$ are propagated: only flags and markers of order $n<\ord(P)$ are visible to $P$, while only flags of order $n\geq\ord(P)$ are passed to the $\Comp_m$ predicate.
	This can be justified if we recall the intuitions staying behind the type system (see page \pageref{para:intuitions}).
	Indeed, while considering flags and markers of order $n$, we should imagine the $\lambda$-term obtained from the current $\lambda$-term by performing all $\beta$-reductions of all orders greater than $n$;
	the distribution of flags and markers of order $n$ in the current $\lambda$-term actually simulates their distribution in this imaginary $\lambda$-term.
	Thus, if $n<\ord(P)$, then our application will disappear in this imaginary $\lambda$-term, and $Q$ will be already substituted somewhere in $P$; 
	for this reason we need to pass the information about flags and markers of order $n$ from $Q$ to $P$.
	Conversely, if $n\geq\ord(P)$, then in the imaginary $\lambda$-term the considered application will be still present, 
	and in consequence the subterm corresponding to $P$ will not see flags and markers of order $n$ placed in the subterm corresponding to $Q$.

	\begin{example}\label{ex:app}
		Denote by $\hat\tau_\mathsf{f}$ and $\hat\tau_\mathsf{m}$ the types derived in Example \ref{ex:lambda}:
		\begin{align*}
			&\hat\tau_\mathsf{f}=(2,\{1\},\emptyset,\{\hat\rho_1\}\arr o)\,,&
			&\mbox{and}&
			&\hat\tau_\mathsf{m}=(2,\emptyset,\{1\},\{\hat\rho_1\}\arr o)\,.
		\end{align*}
		Then, using the \AppR rule, we can derive (where $e$ is a symbol of rank $0$, and $f$ a variable):
		\begin{mathpar}
			\inferrule*[Right=(\!@\!)]{
				\inferrule*[right=(Var)]{ }{
					\Gammae[f\mapsto\{\hat\tau_\mathsf{m}\}]\vdash f:\hat\tau_\mathsf{m}\triangleright 0
				}
				\and
				\inferrule*[Right=(Con)]{ }{
					\Gammae\vdash e:(2,\{1\},\{0\},o)\triangleright 0
				}
			}{
				\Gammae[f\mapsto\{\hat\tau_\mathsf{f},\hat\tau_\mathsf{m}\}]\vdash f\,e:(2,\emptyset,\{0,1\},o)\triangleright 1
			}
		\end{mathpar}
		Recall that $\hat\rho_1=(1,\emptyset,\{0\},o)$.
		In the conclusion of the \AppR rule the information about a flag of order $1$ (from the second premiss) meets the information about the marker of order $1$ (from the first premiss), 
		and thus a flag of order $2$ is placed, which increases the flag counter.
		Notice that we have discarded the full type $\hat\tau_\mathsf{f}$ assigned to $f$ in the type environment;
		this is allowed because $\hat\tau_\mathsf{f}$ provides no markers (equally well $\hat\tau_\mathsf{f}$ could be assigned to $f$ also in one or two of the premisses, and discarded there).
		On the other hand, the full type $\hat\tau_\mathsf{m}$ provides markers, so it cannot be discarded nor duplicated (in particular, we could not pass it to the conclusion of the \ConR rule).
	\end{example}
	
	The key property of the type system is described by the following theorem.
	
	\begin{theorem}\label{thm:types-ok}
		Let $P$ be a closed $\lambda$-term of sort $o$ and complexity $m$.
		Then $\Ll(P)$ is infinite if and only if for arbitrarily large $c$ we can derive $\Gammae\vdash P:\hat\rho_m\triangleright c$, where $\hat\rho_m=(m,\emptyset,\{0,\dots,m-1\},o)$.
	\end{theorem}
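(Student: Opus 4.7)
My plan is to establish a correspondence between derivations of $\Gammae\vdash P:\hat\rho_m\triangleright c$ and pairs consisting of a finite tree $t\in\Ll(P)$ together with a nested marker choice, in such a way that the flag counter $c$ and the tree size $|t|$ determine each other up to an $m$-fold iterated exponential. Since a set of finite trees is infinite iff it contains trees of arbitrarily large size, this correspondence immediately yields both directions of the theorem.

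The first and main step is a subject reduction/expansion lemma: for any single $\beta$-step $P\bred Q$, derivations $\Gammae\vdash P:\hat\rho_m\triangleright c$ are in bijection with derivations $\Gammae\vdash Q:\hat\rho_m\triangleright c$, preserving $c$. The reduction direction is a routine substitution lemma, analyzing the \AppR-over-\LamR configuration at the redex and using the $\Split$ predicate to distribute the intersection typing of the bound variable across its occurrences in the $\lambda$-body. Subject expansion is more delicate: given a derivation for $Q$, one must reconstruct how the markers and flags of the argument split among the occurrences of the bound variable in $P$. The fact that a full type carrying a non-empty marker set cannot be duplicated via $\Split$ makes the marker assignment unambiguous, while the additive behaviour of $\Comp_m$, together with the $\restr_{<\ord(P)}$/$\restr_{\geq\ord(P)}$ split enforced by the \AppR rule, guarantees that flag counts reassemble correctly.

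Iterating this lemma along a head-reduction strategy, and using that any derivation is finite (so only finitely many redexes are ever reached), gives a correspondence between derivations for $P$ and derivations for some complexity-$0$ term $M$ whose constants form a finite initial fragment of $\BT(P)$. Such a term $M$ is essentially a $\Sigma$-labeled tree; by direct inspection of the rules one verifies that a derivation $\Gammae\vdash M:\hat\rho_m\triangleright c$ encodes a choice $t\in\Ll(M)\subseteq\Ll(P)$ made via \BrR, together with a nested sequence of markers (an order-$0$ marker at a leaf, an order-$1$ marker on the path to that leaf, and so on through order $m-1$), with $c$ equal to the number of ``branching'' nodes on the outermost spine whose off-path subtrees carry an order-$(m-1)$ flag (as dictated by $\Comp_m$).

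A routine combinatorial estimate, by induction on $m$, then yields $c\leq|t|$ and, for an appropriate optimal marker choice, $c\geq\log^{(m)}|t|$. Combining these inequalities with the correspondence above, the set of achievable $c$ is unbounded iff $|t|$ is unbounded over $t\in\Ll(P)$, which is exactly the theorem. The main obstacle is the subject-expansion half of the first step: one must correctly partition the markers and flags of the subderivation for the argument among the multiple full types introduced for the bound variable while respecting $\Split$ constraints and the restrictions on flag orders in the \AppR rule. The passage through $\BT(P)$ and the final combinatorial inequality are comparatively routine once that bijection is in place.
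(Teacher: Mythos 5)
Your central lemma --- that for \emph{every} single $\beta$-step $P\bred Q$ the derivations of $\Gammae\vdash P:\hat\rho_m\triangleright c$ and of $\Gammae\vdash Q:\hat\rho_m\triangleright c$ are in bijection with the flag counter preserved exactly --- is false, and the rest of your plan rests on it. The counter is preserved only for $\beta$-reductions whose order equals the order $m$ of the derived full type: there the argument's premisses carry no flag information of order $\geq m$ (since $F_i\subseteq\{0,\dots,m-1\}$), so the surgery at the redex just adds the counters. For a redex of order $k<m$ the situation is different: in the reduct the subderivations for the argument sit at the occurrences of the bound variable, and their order-$n$ flag signals ($k\le n<m$) travel upward \emph{through the internal nodes of the operator's body}, whereas after expansion they enter $\Comp_m$ directly at the \AppR node of the redex. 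Because $F$ is a \emph{set}, several order-$n$ signals that meet at a common ancestor not carrying the order-$n$ marker collapse to a single bit before reaching the marked node, so the number of order-$(n+1)$ (in particular order-$m$) flags created genuinely depends on \emph{where} the signals are injected. Concretely, if the body contributes an order-$(m-1)$ signal at a node $u$ lying between an occurrence of the variable and the first ancestor carrying the order-$(m-1)$ marker, then in the reduct the argument's signal merges with it at $u$ and only one order-$m$ flag results, while in the expanded term the argument's signal reaches the marker via the \AppR node separately and two order-$m$ flags result. So there is no counter-preserving bijection across arbitrary steps, and iterating such a ``lemma'' along head reduction (which mixes redex orders) does not go through.

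This is precisely the difficulty the paper's proof is organized around. Completeness first rearranges the reduction sequence so that reductions of higher order are performed first (Lemma \ref{lem:base}), then alternates two steps: raising the order of the derived type from $k$ to $k+1$ by choosing where to put the order-$k$ marker, which only preserves the counter up to a logarithm (Lemma \ref{lem:increase-m}), and subject expansion restricted to reductions of order exactly $k+1$ at type order $k+1$, which is counter-preserving (Lemma \ref{lem:c-step}). Soundness reverses this with Lemmas \ref{lem:s-step} and \ref{lem:s-zero}, and needs Lemma \ref{lem:zero-when-no-marker} to discard or duplicate marker-free argument derivations without disturbing the counter. Your proposal also leans on termination of head reduction reaching a complexity-$0$ term, which is not automatic (the paper instead decreases the number of order-$m$ \AppR rules in the derivation, then lowers $m$). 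To repair your argument you would have to weaken the bijection claim to the order-matched case and reintroduce the stratification by order together with the logarithmic-loss bookkeeping --- at which point you have essentially reconstructed the paper's proof.
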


	The left-to-right implication of Theorem \ref{thm:types-ok} (completeness of the type system) is shown in Section \ref{sec:compl}, while the opposite implication (soundness of the type system) in Section \ref{sec:sound}.
	In Section \ref{sec:effective} we discuss how Theorem \ref{thm:main} follows from Theorem \ref{thm:types-ok}.
	Before all that, we give a few more examples of derivations, illustrating the type system and Theorem \ref{thm:types-ok}.
		
	\begin{example}\label{ex:large-1}
		In this example we analyze the $\lambda$-term $P_1=R\,(\lambda x.a\,x)$, where $R$ is defined by coinduction as $R=(\lambda f.\br\,(f\,e)\,(R\,(\lambda x.f\,(f\,x))))$.
		As previously, $a$ and $e$ are symbols of rank $1$ and $0$, respectively.
		In $\Ll(P_1)$ there are trees that consist of a branch of $a$ symbols ended with an $e$ symbol, but only those where the number of $a$ symbols is $2^k$ for some $k\in\Nat$.
		Notice that the complexity of $P_1$ is $2$.
		
		Continuing Example \ref{ex:app}, we derive the full type $\hat\sigma_R=(2,\emptyset,\{0\},\{\hat\tau_\mathsf{f},\hat\tau_\mathsf{m}\}\arr o)$ for $R$:
		\begin{mathpar}
			\inferrule*[Right=($\lambda$)]{
				\inferrule*[Right=(Br)]{
					\Gammae[f\mapsto\{\hat\tau_\mathsf{f},\hat\tau_\mathsf{m}\}]\vdash f\,e:(2,\emptyset,\{0,1\},o)\triangleright 1
				}{
					\Gammae[f\mapsto\{\hat\tau_\mathsf{f},\hat\tau_\mathsf{m}\}]\vdash \br\,(f\,e)\,(R\,(\lambda x.f\,(f\,x))):(2,\emptyset,\{0,1\},o)\triangleright 1
				}
			}{
				\Gammae\vdash R:\hat\sigma_R\triangleright 1
			}
		\end{mathpar}
	
		Next, we derive the same full type for $R$, but using the second argument of the $\br$ symbol; this results in greater values of the flag counter.
		We start by deriving the full type $\hat\tau_\mathsf{f}$ for the subterm $\lambda x.f\,(f\,x)$:
		\begin{mathpar}
			\inferrule*[right=($\lambda$)]{
				\inferrule*[Right=(\!@\!)]{
					\inferrule*{ }{
						\Gammae[f\mapsto\{\hat\tau_\mathsf{f}\}]\vdash f:\hat\tau_\mathsf{f}\triangleright 0
					}
					\and
					\inferrule*[Right=(\!@\!)]{
						\inferrule*{ }{
							\Gammae[f\mapsto\{\hat\tau_\mathsf{f}\}]\vdash f:\hat\tau_\mathsf{f}\triangleright 0
						}
						\and
						\inferrule*{ }{
							\Gammae[x\mapsto\{\hat\rho_1\}]\vdash x:(2,\emptyset,\{0\},o)\triangleright 0
						}
					}{
						\Gammae[f\mapsto\{\hat\tau_\mathsf{f}\},x\mapsto\{\hat\rho_1\}]\vdash f\,x:(2,\{1\},\{0\},o)\triangleright 0
					}
				}{
					\Gammae[f\mapsto\{\hat\tau_\mathsf{f}\},x\mapsto\{\hat\rho_1\}]\vdash f\,(f\,x):(2,\{1\},\{0\},o)\triangleright 0
				}
			}{
				\Gammae[f\mapsto\{\hat\tau_\mathsf{f}\}]\vdash\lambda x.f\,(f\,x):\hat\tau_\mathsf{f}\triangleright 0
			}
		\end{mathpar}
		In the above derivation there are no flags nor markers.
		Next, we derive $\hat\tau_\mathsf{m}$ for the same subterm:
		\begin{mathpar}
			\inferrule*[right=($\lambda$)]{
				\inferrule*[Right=(\!@\!)]{
					\inferrule*{ }{
						\Gammae[f\mapsto\{\hat\tau_\mathsf{f}\}]\vdash f:\hat\tau_\mathsf{f}\triangleright 0
					}
					\and
					\inferrule*[Right=(\!@\!)]{
						\inferrule*{ }{
							\Gammae[f\mapsto\{\hat\tau_\mathsf{m}\}]\vdash f:\hat\tau_\mathsf{m}\triangleright 0
						}
						\and
						\inferrule*{ }{
							\Gammae[x\mapsto\{\hat\rho_1\}]\vdash x:(2,\emptyset,\{0\},o)\triangleright 0
						}
					}{
						\Gammae[f\mapsto\{\hat\tau_\mathsf{m}\},x\mapsto\{\hat\rho_1\}]\vdash f\,x:(2,\emptyset,\{0,1\},o)\triangleright 0
					}
				}{
					\Gammae[f\mapsto\{\hat\tau_\mathsf{f},\hat\tau_\mathsf{m}\},x\mapsto\{\hat\rho_1\}]\vdash f\,(f\,x):(2,\emptyset,\{0,1\},o)\triangleright 1
				}
			}{
				\Gammae[f\mapsto\{\hat\tau_\mathsf{f},\hat\tau_\mathsf{m}\}]\vdash\lambda x.f\,(f\,x):\hat\tau_\mathsf{m}\triangleright 1
			}
		\end{mathpar}
		Below the lower \AppR rule the information about a flag of order $1$ meets the information about the marker of order $1$, and thus a flag of order $2$ is placed, which increases the flag counter.
		We continue with the $\lambda$-term $R$:
		\begin{mathpar}
			\inferrule*[right=($\lambda$)]{
				\inferrule*[Right=(Br)]{
					\inferrule*[Right=(\!@\!)]{
						\Gammae\vdash R:\hat\sigma_R\triangleright c
						\and
						\Gammae[f\mapsto\{\hat\tau_\mathsf{f}\}]\vdash\lambda x.f\,(f\,x):\hat\tau_\mathsf{f}\triangleright 0
						\and
						\Gammae[f\mapsto\{\hat\tau_\mathsf{f},\hat\tau_\mathsf{m}\}]\vdash\lambda x.f\,(f\,x):\hat\tau_\mathsf{m}\triangleright 1
					}{
						\Gammae[f\mapsto\{\hat\tau_\mathsf{f},\hat\tau_\mathsf{m}\}]\vdash R\,(\lambda x.f\,(f\,x)):(2,\emptyset,\{0,1\},o)\triangleright c+1
					}
				}{
					\Gammae[f\mapsto\{\hat\tau_\mathsf{f},\hat\tau_\mathsf{m}\}]\vdash \br\,(f\,e)\,(R\,(\lambda x.f\,(f\,x))):(2,\emptyset,\{0,1\},o)\triangleright c+1
				}
			}{
				\Gammae\vdash R:\hat\sigma_R\triangleright c+1
			}
		\end{mathpar}
		In this fragment of a derivation no flag nor marker is placed.
		In particular, there is no order-$2$ flag in conclusion of the \AppR rule, although its second premiss provides a flag of order $1$ while the third premiss provides the marker of order $1$.
		We recall from the definition of the \AppR rule that the information about flags and markers coming from the arguments is divided into two parts.
		Numbers smaller than the order of the operator ($\ord(R)=2$ in our case) are passed to the operator, while only greater numbers ($\geq 2$ in our case) contribute in creating new flags via the $\Comp$ predicate.
		
		By composing the above fragments of a derivation, we can derive $\Gammae\vdash R:\hat\sigma_R\triangleright c$ for every $c\geq 1$.
		Recall that in Examples \ref{ex:var}-\ref{ex:con} we have derived $\Gammae\vdash\lambda x.a\,x:\hat\tau_\mathsf{f}\triangleright 0$ and $\Gammae\vdash\lambda x.a\,x:\hat\tau_\mathsf{m}\triangleright 1$.
		Together with the above, this allows to derive for $P_1$ the full type $\hat\rho_2=(2,\emptyset,\{0,1\},o)$ (appearing in Theorem \ref{thm:types-ok}):
		\begin{mathpar}
			\inferrule*[Right=(\!@\!)]{
				\Gammae\vdash R:\hat\sigma_R\triangleright c
				\and
	                        \Gammae\vdash\lambda x.a\,x:\hat\tau_\mathsf{f}\triangleright 0
				\and
	                        \Gammae\vdash\lambda x.a\,x:\hat\tau_\mathsf{m}\triangleright 1
			}{
				\Gammae\vdash P_1:\hat\rho_2\triangleright c+1
			}
		\end{mathpar}
		We can notice a correspondence between a derivation with flag counter $c+1$ and a tree in $\Ll(P)$ of size $2^{c-1}+1$.
		We remark that in every of these derivations only three flags of order $0$ and only three flags of order $1$ are present, in the three nodes using the \ConR rule.
	\end{example}
	
	\begin{example}
		Consider a similar $\lambda$-term $P_2=R\,(\lambda x.b\,x\,x)$, where $R$ is as previously, and $b$ is a symbol of rank $2$.
		In $\Ll(P_2)$ we have, for every $k\in\Nat$, a full binary tree in which every branch consist of $2^k$ symbols $b$ and ends with an $e$ symbol.
	
		This time for the subterm $\lambda x.b\,x\,x$ we need to derive three full types:
		\begin{align*}
			&\hat\tau_0'=(2,\{0\},\emptyset,\{(1,\{0\},\emptyset,o)\}\arr o)\,,\\
			&\hat\tau_\mathsf{f}'=(2,\{1\},\emptyset,\{(1,\{0\},\emptyset,o),\hat\rho_1\}\arr o)\,,\qquad\mbox{and}\\
			&\hat\tau_\mathsf{m}'=(2,\emptyset,\{1\},\{(1,\{0\},\emptyset,o),\hat\rho_1\}\arr o)\,.
		\end{align*}
		The last one is derived with flag counter $1$.
		Notice that $\hat\tau_\mathsf{f}'$ and $\hat\tau_\mathsf{m}'$ need now two full types for the argument $x$; the new one $(1,\{0\},\emptyset,o)$ describes the subtree that is not on the path to the order-$0$ marker.
		We also have a new full type $\hat\tau_0'$ that describes the use of $\lambda x.b\,x\,x$ outside of the path to the order-$0$ marker.
			
		Then, similarly as in the previous example, for every $c\geq 1$ we can derive $\Gammae\vdash R:\hat\sigma_R'\triangleright c$, 
		where $\hat\sigma_R'=(2,\emptyset,\{0\},\{\hat\tau_0',\hat\tau_\mathsf{f}',\hat\tau_\mathsf{m}'\}\arr o)$.
		Again, this allows to derive $\Gammae\vdash P_2:\hat\rho_2\triangleright c+1$.
		This time a derivation with flag counter $c+1$ corresponds to a tree in $\Ll(P)$ of size $2^h-1$ with $h=2^{c-1}+1$.
	\end{example}
	
	\begin{example}
		Next, consider the $\lambda$-term $P_3=R\,(\lambda x.\,x)$.
		The only tree in $\Ll(P_3)$ consists of a single $e$ node.
		Let us see how the derivation from Example \ref{ex:large-1} has to be modified.
		The full type $\hat\tau_\mathsf{m}$ can still be derived for $\lambda x.\,x$ (although with flag counter $0$ now), 
		but instead of $\hat\tau_\mathsf{f}$ we have to use $\hat\tau_\mathsf{f}''=(2,\emptyset,\emptyset,\{\hat\rho_1\}\arr o)$ that provides no flag of order $1$:
		\begin{mathpar}
			\inferrule*[Right=($\lambda$)]{
				\inferrule*[Right=(Var)]{ }{
					\Gammae[x\mapsto\{\hat\rho_1\}]\vdash x:(2,\emptyset,\{0\},o)\triangleright 0
				}
			}{
				\Gammae\vdash\lambda x.x:\hat\tau_\mathsf{f}''\triangleright 0
			}
			\and
			\inferrule*[Right=($\lambda$)]{
				\inferrule*[Right=(Var)]{ }{
					\Gammae[x\mapsto\{\hat\rho_1\}]\vdash x:(2,\emptyset,\{0,1\},o)\triangleright 0
				}
			}{
				\Gammae\vdash\lambda x.x:\hat\tau_\mathsf{m}\triangleright 0
			}
		\end{mathpar}
		
		Next, for $R$ we want to derive the full type $\hat\sigma_R''=(2,\emptyset,\{0\},\{\hat\tau_\mathsf{f}'',\hat\tau_\mathsf{m}\}\arr o)$.
		We can easily adopt every of the previous derivations for $\Gammae\vdash R:\hat\sigma_R\triangleright c$: we basically replace every $\hat\tau_\mathsf{f}$ by $\hat\tau_\mathsf{f}''$.
		The key point is that while deriving the full type $\hat\tau_\mathsf{m}$ for the subterm $\lambda x.f\,(f\,x)$, previously in the lower \AppR rule we have received information about an order-$1$ flag,
		and thus we have created an order-$2$ flag and increased the flag counter;
		this time there is no information about an order-$1$ flag, and thus we do not create an order-$2$ flag and do not increase the flag counter.
		In consequence, even if this part of the derivation is repeated arbitrarily many times, the value of the flag counter of the whole derivation remains $1$.
	\end{example}
	
	\begin{example}
		Finally, consider the $\lambda$-term $P_4=(\lambda g.P_3)\,(\lambda x.a\,(a\,(\dots\,(a\,x)\dots))$, which $\beta$-reduces to $P_3$.
		Notice that we can create the following derivation:
		\begin{mathpar}
			\inferrule*[Right=($\lambda$)]{
				\inferrule*[Right=(Con)]{
					\inferrule*[Right=(Con)]{ 
						\inferrule*[Right=(Con)]{ 
							\inferrule*[Right=(Var)]{ }{
								\Gammae[x\mapsto\{\hat\rho_1\}]\vdash x:(2,\emptyset,\{0\},o)\triangleright 0
							}
						}{
							\Gammae[x\mapsto\{\hat\rho_1\}]\vdash a\,x:(2,\{1\},\{0\},o)\triangleright 0
						}
					}{
						\vdots
					}
				}{
					\Gammae[x\mapsto\{\hat\rho_1\}]\vdash a\,(a\,(\dots\,(a\,x)\dots)):(2,\{1\},\{0\},o)\triangleright 0
				}
			}{
				\Gammae\vdash\lambda x.a\,(a\,(\dots\,(a\,x)\dots)):\hat\tau_\mathsf{f}\triangleright 0
			}
		\end{mathpar}
		Every \ConR rule used in this derivation places in its conclusion an order-$0$ flag and an order-$1$ flag.
		This derivation can be used as a part of a derivation for $P_4$:
		\begin{mathpar}
			\inferrule*[Right=(\!@\!)]{
				\inferrule*[right=($\lambda$)]{
					\Gammae[g\mapsto\{\hat\tau_\mathsf{f}\}]\vdash P_3:\hat\rho_2\triangleright 1
				}{
					\Gammae\vdash\lambda g.P_3:(2,\emptyset,\{0,1\},\{\hat\tau_\mathsf{f}\}\arr o)\triangleright 1
				}
				\and
				\Gammae\vdash\lambda x.a\,(a\,(\dots\,(a\,x)\dots)):\hat\tau_\mathsf{f}\triangleright 0
			}{
				\Gammae\vdash P_4:\hat\rho_2\triangleright 1
			}
		\end{mathpar}
		Because $\hat\tau_\mathsf{f}$ provides no markers, it can be removed from the type environment and thus for $P_3$ we can use the derivation from the previous example.
		We thus obtain a derivation for $P_4$ in which there are many order-$0$ and order-$1$ flags (but only one flag of order $2$).
		This shows that in the flag counter we indeed need to count only the number of flags of the maximal order (not, say, the total number of flags of all orders).
	\end{example}

\section{Completeness}\label{sec:compl}

	The proof of the left-to-right implication of Theorem \ref{thm:types-ok} is divided into the following three lemmata.
	Recall that a $\beta$-reduction $P\bred Q$ is of order $n$ if it concerns a redex $(\lambda x.R)\,S$ such that $\ord(\lambda x.R)=n$.
	The number of nodes of a tree $t$ is denoted $|t|$.
	As in Theorem \ref{thm:types-ok}, we denote $\hat\rho_m=(m,\emptyset,\{0,\dots,m-1\},o)$.
	
	\begin{lemma}\label{lem:base}
		Let $P$ be a closed $\lambda$-term of sort $o$ and complexity $m$, and let $t\in\Ll(P)$.
		Then there exist $\lambda$-terms $Q_m,Q_{m-1},\dots,Q_0$ such that $P=Q_m$, and for every $k\in\{1,\dots,m\}$ the term $Q_{k-1}$ can be reached from $Q_k$ using only $\beta$-reductions of order $k$, 
		and we can derive $\Gammae\vdash Q_0:\hat\rho_0\triangleright|t|$.
	\end{lemma}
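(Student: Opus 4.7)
The plan is to reduce $P$ level by level in decreasing order of complexity, obtaining a complexity-$0$ term $Q_0$, and then to read off a derivation for $\Gammae \vdash Q_0 : \hat\rho_0 \triangleright |t|$ directly from $t$.

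First, set $Q_m := P$ and, for each $k$ from $m$ down to $1$, define $Q_{k-1}$ as a $\lambda$-term reached from $Q_k$ by an exhaustive (possibly infinitary) sequence of order-$k$ $\beta$-reductions. The key structural observation is that an order-$k$ reduction cannot create a fresh order-$k$ redex: in a redex $(\lambda x.R)\,S$ with $\ord(\lambda x.R) = k$, both $x$ and $S$ have order at most $k-1$, so no copy of $S$ substituted into $R$ can appear as a head abstraction of order $k$. Hence, after exhausting order-$k$ redexes, the complexity drops to at most $k-1$. Since $\beta$-reduction preserves B\"ohm trees, $\BT(Q_0) = \BT(P)$, and in particular $t \in \Ll(Q_0)$.

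After the last stage, $Q_0$ has complexity $0$, so it contains no $\lambda$-abstractions and (being closed) no variables either; it is just a possibly infinite $\Sigma$-labeled tree built from constants, including $\br$. I would then derive the desired judgment by induction following the witness $Q_0 \to_\br^* t$ together with the structure of $t$. At each $\br$-node along the witness path, use \BrR to commit to the subtree selected by the witness; once an $a$-headed subterm $a\,R_1\,\dots\,R_r$ is exposed, with $a$ being the root label of $t$ and $t_1,\dots,t_r$ its children such that $t_i \in \Ll(R_i)$, apply \ConR with premises derived inductively with flag counters $|t_i|$. Since $m = 0$, every \ConR rule contributes $c' = 1$ (and $\Comp_0$ adds nothing further, as all $F_i$ and $M_i$ are forced to be empty), so the flag counter of the assembled derivation equals $1 + \sum_i |t_i| = |t|$, as required.

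The main delicate point is the first step: in the infinitary framework, exhausting all order-$k$ redexes may require a strongly convergent / transfinite reduction sequence, and one must argue that such a sequence converges to a well-defined infinitary $\lambda$-term of complexity at most $k-1$ while preserving the B\"ohm tree. The non-creation property for order-$k$ redexes is precisely what keeps this convergence controllable. By contrast, once $Q_0$ is in hand, the type derivation is a straightforward induction driven by the witness $\to_\br^*$-reduction and by the structure of $t$.
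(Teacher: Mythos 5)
Your proposal diverges from the paper's argument in a way that opens real gaps. First, your central structural claim --- that an order-$k$ $\beta$-reduction cannot create a fresh order-$k$ redex --- is false. Your argument only excludes redexes arising from substituted copies of $S$ (which indeed have order $<k$), but misses the case where the contracted redex itself sits in operator position: reducing $((\lambda x.\lambda y.R')\,S)\,U$ uncovers the new redex $(\lambda y.R'[S/x])\,U$, whose order is $\ord(\lambda y.R')\leq k$ and can equal $k$ (e.g.\ $((\lambda x^o.\lambda y^o.y)\,e)\,e'$: an order-$1$ reduction creating an order-$1$ redex). The paper only uses --- and only needs --- the weaker, correct fact that no redex of order \emph{strictly greater} than $k$ is created. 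Since you invoke the false claim precisely to make your ``exhaustive'' order-$k$ reduction convergent, the delicate point you flag is not merely delicate but unresolved: you have not shown that all order-$k$ redexes can be eliminated by a well-defined (even strongly convergent) reduction, and the further inference ``no order-$k$ redexes $\Rightarrow$ complexity $\leq k-1$'' is also not automatic in the infinitary calculus (a closed term may contain an infinite application spine of order-$k$ subterms with no redex at all), so the assertion that $Q_0$ is literally a $\Sigma$-labeled tree is unsupported.

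Second, even granting convergence, the infinitary-normalization route proves the wrong statement for the way the lemma is used: in Section~4 one applies Lemma~\ref{lem:c-step} once per $\beta$-reduction between $Q_k$ and $Q_{k-1}$, so these must be \emph{finite} reduction sequences; a ``possibly infinitary'' sequence cannot be transported back step by step. The paper avoids all of this by exploiting the finiteness of $t$: since $t$ is finite, it lies in a finite prefix of $\BT(P)$, which is exposed after finitely many $\beta$-reductions, and this finite sequence is then rearranged so that higher-order reductions come first (using the correct non-creation fact above). The resulting $Q_0$ is neither a normal form nor of complexity $0$; it merely has $t$ inside its already-expanded, constant-headed prefix, and the type derivation never inspects anything else. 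Your concluding typing argument (following the $\to_\br^*$ witness, \BrR at branch nodes, each \ConR contributing $1$ to the counter, giving $|t|$ in total) matches the paper's routine part and is fine --- but note that it works just as well on such a partially expanded $Q_0$, so the full order-by-order normalization you attempt is unnecessary as well as unjustified.
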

	
	\begin{lemma}\label{lem:increase-m}
		Suppose that we can derive $\Gammae\vdash P:\hat\rho_m\triangleright c$.
		Then we can also derive $\Gammae\vdash P:\hat\rho_{m+1}\triangleright c'$ for some $c'\geq\log_2 c$.
	\end{lemma}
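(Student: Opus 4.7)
The strategy is to obtain $D'$ from the given derivation $D$ by (i) uniformly increasing from $m$ to $m+1$ the first component of every outer full type (the one on the right of $\vdash$ in each judgment), and (ii) placing a single new order-$m$ marker at a carefully chosen leaf of $D$. The tree shape of $D$ and all argument type-sets $T$ occurring inside types are kept intact: full types inside each such $T$ have order $k\leq m$, so $F,M\subseteq\{0,\dots,k-1\}$ cannot mention $m$; moreover, in \AppR the argument's $F_i,M_i$ are restricted to orders ${<}\ord(P)\leq m$, so what the operator sees is unchanged. Only the outer sets $F_v,M_v$ and the flag counter at each node need to be recomputed under $\Comp_{m+1}$; since $\Comp_{m+1}$ agrees with $\Comp_m$ on orders ${<}m$, the restrictions $F_v\restr_{<m}$ and $M_v\restr_{<m}$ remain exactly those of $D$.

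Choose a root-to-leaf path $\pi$ in $D$, ending at a leaf $\ell$ (necessarily a \VarR or $r=0$ \ConR instance). In $D'$, augment $M_v$ with $\{m\}$ for each $v\in\pi$, declaring the marker at $\ell$ by including $m$ in the $M'$ slot of that rule. The side condition $M\restr_{<k}=M'$ of \VarR is preserved because $m\geq k$; the disjoint-union constraints on $M$ in \ConR and \AppR at each interior $v\in\pi$ are preserved because exactly one child of $v$ lies on $\pi$; and \LamR does not remove the new $m$, because the $M'$'s in the subtracted $\bigcup_{(k,F',M',\sigma)\in T} M'$ lie in $\{0,\dots,k-1\}\subseteq\{0,\dots,m-1\}$. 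Under $\Comp_{m+1}$, each ``order-$m$ flag event'' at a node $v$ where $\Comp_m$ previously contributed $f'_m(v)>0$ now enters $F_v$ and propagates upward through $F$ until meeting a node on $\pi$, where the cascade converts it into an order-$(m+1)$ flag. A local count at each $v\in\pi$ then gives
\begin{equation*}
c' \;=\; \sum_{v\in\pi}\Bigl(f'_m(v)\;+\;\bigl|\{u\text{ off-path child of }v \mid w(u)>0\}\bigr|\Bigr),
\end{equation*}
where $w(v)=\sum_{u\in\mathrm{subtree}(v)} f'_m(u)$, so $w(\mathrm{root})=c$.

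It remains to choose $\pi$ so that $c'\geq\log_2 c$. This is a combinatorial fact about rooted trees with nonnegative integer node-weights summing to $c$: there exists a root-to-leaf path for which the displayed sum is at least $\log_2 c$. I will prove it by induction on $c$, descending at each internal node into a child of maximum weight. Writing $f=f'_m(v)$, $k'$ for the number of children of positive weight, and $c_{\max}$ for the maximum of those weights, one has $w(v)\leq f+k'\cdot c_{\max}$; the path's contribution at $v$ is $f+(k'-1)$ when $c_{\max}>0$, and by induction one obtains $\log_2 c_{\max}$ from below. The desired inequality $c'\geq\log_2 w(v)$ then reduces to the elementary estimate $2^{n-1}\geq n$ for $n=f+k'\geq 1$. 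The case $c_{\max}=0$ is immediate since then $c'\geq f=w(v)$.

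The main obstacle is the bookkeeping: verifying that \emph{every} rule instance in $D'$ satisfies its side conditions (the disjoint-union constraints on $M$, the freshness of the order-$m$ marker, and $\Comp_{m+1}$ producing the claimed outer $(F_v,c_v)$ at each node). Once this is in place, the greedy path supplies $c'\geq\log_2 c$, as required.
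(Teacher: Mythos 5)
Your proposal is correct and follows essentially the same route as the paper's proof: raise the order of the outer full types from $m$ to $m+1$, place the new order-$m$ marker at a leaf reached by greedily descending into the child of maximal weight (the paper descends by maximal flag counter, which is exactly your subtree weight $w$), let off-path order-$m$ flag information collapse into the sets $F$ and turn into order-$(m+1)$ flags on the chosen path, and close with the same estimate $2^{n-1}\geq n$ (the paper's $k-1\geq\log_2 k$) to get $c'\geq\log_2 c$. Your write-up just makes explicit the bookkeeping and the counter formula that the paper stipulates directly.
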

	
	\begin{lemma}\label{lem:c-step}
		Suppose that $P\bred Q$ is a $\beta$-reduction of order $m$, and we can derive $\Gamma\vdash Q:\hat\tau\triangleright c$ with $\ord(\hat\tau)=m$.
		Then we can also derive $\Gamma\vdash P:\hat\tau\triangleright c$.
	\end{lemma}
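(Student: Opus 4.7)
I would induct on the position of the redex within $P$. When the contracted redex $(\lambda x.R)\,S$ (with $\ord(\lambda x.R)=m$) lies strictly inside $P$, the outermost connective of $Q$ agrees with that of $P$, so the derivation for $Q$ begins with the same rule as one for $P$ would, with identical sub-derivations outside the redex position. I would apply the induction hypothesis to the sub-derivation containing the redex and reassemble under that same rule, keeping all parameters unchanged.

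The base case is $P=(\lambda x.R)\,S$, $Q=R[S/x]$, and here the heart of the proof is a substitution-inversion lemma: given $\Gamma\vdash R[S/x]:\hat\tau\triangleright c$, there exist an index set $I$, full types $\hat\sigma_i=(m,F_i,M_i,\tau_i)$ for $i\in I$, type environments $\Gamma'$ and $(\Gamma_i)_{i\in I}$, and derivations $\Gamma'[x\mapsto T]\vdash R:\hat\tau\triangleright c'$ (with $T=\{\hat\sigma_i\mid i\in I\}$) and $\Gamma_i\vdash S:\hat\sigma_i\triangleright c_i$ for each $i$, satisfying $\Split(\Gamma\palka\Gamma',(\Gamma_i)_{i\in I})$, pairwise disjointness of the $M_i$'s (and disjointness from the rest of $M_R$, where $(m,F_R,M_R,\tau)=\hat\tau$), and crucially $c=c'+\sum_{i\in I}c_i$. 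This lemma is proved by induction on the derivation of $R[S/x]$, dispatching on the shape of $R$: when $R=x$, the whole derivation becomes a single $S$-sub-derivation (using that since $k=\ord(\lambda x.R)=m$, the \VarR leaf for $x$ places no extra markers, so the full types of $x$ and $S$ coincide); otherwise one recursively inverts the sub-derivations and reassembles under the same outer rule.

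I would then reconstruct a derivation for $P$ by first applying \LamR to the extracted $R$-derivation, yielding $\Gamma'\vdash\lambda x.R:(m,F_R,M_R\setminus\bigcup_iM_i,T\arr\tau)\triangleright c'$, and then \AppR with the $S$-derivations as arguments. At this \AppR node the marker set is $(M_R\setminus\bigcup_iM_i)\uplus\biguplus_iM_i=M_R$, matching $\hat\tau$. For the flag counter, $\Comp_m$ receives $(F_R,c')$ from the operator and $(F_i\restr_{\geq m},c_i)=(\emptyset,c_i)$ from each argument, because $F_i\subseteq\{0,\dots,m-1\}$, so only $F_R$ contributes to the chain variables. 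Since $F_R\cap M_R=\emptyset$ is a standing full-type constraint, a brief induction on $n$ shows $f_n=[n\in F_R]$ and $f'_n=0$ for every $n\geq 1$. In particular $f'_m=0$, so the new \AppR node creates no order-$m$ flag, the output $F$ is exactly $F_R$, and the total flag counter is $c'+\sum_ic_i=c$, as required.

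The main obstacle is the substitution-inversion lemma, which has to track simultaneously: how $\Split$ distributes free-variable contributions between the $R$-part and the $S$-parts; how the disjoint-union constraints on markers propagate so that the extracted $M_i$'s come out pairwise disjoint; and how the $F_i$-component at each \VarR leaf for $x$ in the extracted $R$-derivation matches the one at the root of the corresponding $S$-sub-derivation, so that every $\Comp_m$ invocation at a parent node receives identical inputs in both derivations. Once the bookkeeping is in place, the identity $c=c'+\sum_ic_i$ is immediate, because each order-$m$ flag is created at some \AppR or \ConR node lying entirely inside the $R$-part or inside some $S_i$-part, and, by the computation above, none are created at the newly introduced \AppR boundary.
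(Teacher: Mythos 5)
Your proposal is correct and follows essentially the same route as the paper: reduce to the base case $P=(\lambda x.R)\,S$, invert the derivation of $R[S/x]$ into a derivation for $R$ (with the occurrences of $S$ replaced by \VarR leaves for $x$ carrying the full types $\hat\sigma_i$) plus the extracted derivations for $S$, and reassemble with \LamR followed by \AppR; your ``substitution-inversion lemma'' is exactly the paper's surgery on the derivation. Your explicit check that the new \AppR node creates no order-$m$ flag (via $F_i\restr_{\geq m}=\emptyset$ and $F_R\cap M_R=\emptyset$), so that the counter stays $c'+\sum_i c_i=c$, is a detail the paper leaves implicit, and it is verified correctly.
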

	
	Now the left-to-right implication of Theorem \ref{thm:types-ok} easily follows.
	Indeed, take a closed $\lambda$-term $P$ of sort $o$ and complexity $m$ such that $\Ll(P)$ is infinite, and take any $c\in\Nat$.
	By $\log^k_2$ we denote the $k$-fold application of the logarithm: $\log^0_2 x=x$ and $\log^{k+1}_2 x=\log_2(\log_2^k x)$.
	Since $\Ll(P)$ is infinite, it contains a tree $t$ so big that $\log_2^m|t|\geq c$.
	We apply Lemma \ref{lem:base} to this tree, obtaining $\lambda$-terms $Q_m,Q_{m-1},\dots,Q_0$ and a derivation of $\Gammae\vdash Q_0:\hat\rho_0\triangleright|t|$.
	Then repeatedly for every $k\in\{1,\dots,m\}$ we apply Lemma \ref{lem:increase-m}, obtaining a derivation of $\Gammae\vdash Q_{k-1}:\hat\rho_k\triangleright c_k$ for some $c_k\geq\log^k_2|t|$,
	and Lemma \ref{lem:c-step} for every $\beta$-reduction (of order $k$) between $Q_k$ and $Q_{k-1}$, obtaining a derivation of $\Gammae\vdash Q_k:\hat\rho_k\triangleright c_k$.
	We end with a derivation of $\Gammae\vdash P:\hat\rho_m\triangleright c_m$, where $c_m\geq\log^m_2|t|\geq c$, as needed.
	In the remaining part of this section we prove the three lemmata.
	
	\begin{proof}[Proof of Lemma \ref{lem:base} (sketch)]
		Recall that $t\in\Ll(P)$ is a finite tree, thus it can be found in some finite prefix of the B\"ohm tree of $P$.
		By definition, this prefix will be already expanded after performing some finite number of $\beta$-reductions from $P$.
		We need to observe that these $\beta$-reductions can be rearranged, so that those of higher order are performed first.
		\label{page:rearrange-beta}
		The key point is to observe that when we perform a $\beta$-reduction of some order $k$, then no new $\beta$-redexes of higher order appear in the term.
		Indeed, suppose that $(\lambda x.R)\,S$ is changed into $R[S/x]$ somewhere in a term, where $\ord(\lambda x.R)=k$.
		One new redex that may appear is when $R$ starts with a $\lambda$, and to the whole $R[S/x]$ some argument is applied; this redex is of order $\ord(R)\leq k$.
		Some other redexes may appear when $S$ starts with a $\lambda$, and is substituted for such appearance of $x$ to which some argument is applied; but this redex is of order $\ord(S)<k$.
		
		We can thus find a sequence of $\beta$-reductions in which $\beta$-reductions are arranged according to their order, that leads from $P$ to some $Q_0$ such that $t$ can be found in the prefix of $Q_0$ that is already expanded to a tree.
		It is now a routine to use the rules of our type system and derive $\Gammae\vdash Q_0:\hat\rho_0\triangleright|t|$:
		in every $\br$-labeled node we choose the subtree in which $t$ continues, and this effects in counting the number of nodes of $t$ in the flag counter.
	\end{proof}
	
	\begin{proof}[Proof of Lemma \ref{lem:increase-m}]
		Consider some derivation of $\Gammae\vdash P:\hat\rho_m\triangleright c$.
		In this derivation we choose a leaf in which we will put the order-$m$ marker, as follows.
		Starting from the root of the derivation, we repeatedly go to this premiss in which the flag counter is the greatest (arbitrarily in the case of a tie).
		In every node that is not on the path to the selected leaf, we replace the current type judgment $\Gamma\vdash Q:(m,F,M,\tau)\triangleright d$ by $\Gamma\vdash Q:(m+1,F',M,\tau)\triangleright 0$,
		where $F'=F\cup\{m\}$ if $d>0$, and $F'=F$ otherwise.
		In the selected leaf and all its ancestors, we change the order from $m$ to $m+1$, we add $m$ to the set of marker orders, and we recalculate the flag counter.
		
		Let us see how such transformation changes the flag counter on the path to the selected leaf.
		We will prove (by induction) that the previous value $d$ and the new value $d'$ of the flag counter in every node on this path satisfy $d'\geq\log_2 d$.
		In the selected leaf itself, the flag counter (being either $0$ or $1$) remains unchanged; we have $d'=d\geq\log_2 d$.
		Next, consider any proper ancestor of the selected node.
		Let $k$ be the number of those of its children in which the flag counter was positive, plus the number of order-$m$ flags placed in the considered node itself.
		Let also $d_{\max}$ and $d_{\max}'$ be the previous value and the new value of the flag counter in this child that is in the direction of the selected leaf.
		By construction, the flag counter in this child was maximal, which implies $k\cdot d_{\max}\geq d$, while by the induction assumption $d'_{\max}\geq\log_2 d_{\max}$.
		To $d'$ we take the flag counter only from the special child, while for other children with positive flag counter we add $1$, i.e., $d'=k-1+d'_{\max}$.
		Altogether we obtain $d'=k-1+d'_{\max}\geq k-1+\log_2d_{\max}\geq\log_2(k\cdot d_{\max})\geq\log_2 d$, as required.
	\end{proof}
	
	\begin{proof}[Proof of Lemma \ref{lem:c-step}]
		We consider the base case when $P=(\lambda x.R)\,S$ and $Q=R[S/x]$; the general situation (redex being deeper in $P$) is easily reduced to this one.
		In the derivation of $\Gamma\vdash Q:\hat\tau\triangleright c$ we identify the set $I$ of places (nodes) where we derive a type for $S$ substituted for $x$.
		For $i\in I$, let $\Sigma_i\vdash S:\hat\sigma_i\triangleright d_i$ be the type judgment in $i$.
		We change the nodes in $I$ into leaves, where we instead derive $\Gammae[x\mapsto\{\hat\sigma_i\}]\vdash x:\hat\sigma_i\triangleright 0$.
		It should be clear that we can repair the rest of the derivation, by changing type environments, replacing $S$ by $x$ in $\lambda$-terms, and decreasing flag counters.
		In this way we obtain derivations of $\Sigma_i\vdash S:\hat\sigma_i\triangleright d_i$ for every $i\in I$, and a derivation of $\Sigma'\vdash R:\hat\tau\triangleright d$,
		where $\Sigma'=\Sigma[x\mapsto\{\hat\sigma_i\mid i\in I\}]$ with $\Sigma(x)=\emptyset$, 
		and $\Split(\Gamma\palka\Sigma,(\Sigma_i)_{i\in I})$, and $c=d+\Sigma_{i\in I}d_i$.
		To the latter type judgment we apply the $\LamR$ rule, and then we merge it with the type judgments for $S$ using the \AppR rule, which results in a derivation for $\Gamma\vdash P:\hat\tau\triangleright c$.
		We remark that different $i\in I$ may give identical type judgments for $S$ (as long as the set of markers in $\hat\sigma_i$ is empty); this is not a problem.
		The \AppR rule requires that $\ord(\hat\sigma_i)=\ord(\lambda x.R)$; we have that $\ord(\hat\sigma_i)=\ord(\hat\tau)$, and $\ord(\hat\tau)=m=\ord(\lambda x.R)$ by assumption.
	\end{proof}

\section{Soundness}\label{sec:sound}

	In this section we sketch the proof of the right-to-left implication of Theorem \ref{thm:types-ok}.
	We, basically, need to reverse the proof from the previous section.
	The following new fact is now needed.

	\begin{lemma}\label{lem:zero-when-no-marker}
		If we can derive $\Gamma\vdash P:(m,F,M,\tau)\triangleright c$ with $m-1\not\in M$ and $\ord(P)\leq m-1$, then $c=0$.
	\end{lemma}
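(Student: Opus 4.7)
The plan is to prove the lemma by induction on the derivation, but a direct induction on the statement as given fails at the \AppR case: when the conclusion $P\,Q$ has $\ord(P\,Q)\leq m-1$, the operator can still have $\ord(P)=m$ (e.g.\ when $\ord(\alpha\arr\beta)=m$ but $\ord(\beta)<m$), so the induction hypothesis cannot be applied to the operator premise. I would therefore strengthen the invariant by also tracking a safety condition on the derived type. Call a type $\tau$ \emph{safe} if no full type occurring on the left of some $\arr$ in its decomposition carries $m-1$ in its marker component, and call $\Gamma$ safe if none of the full types in any $\Gamma(y)$ does. The strengthened claim is: if $\Gamma\vdash P:(m,F,M,\tau)\triangleright c$ with $m\geq 1$, $m-1\notin M$, $\tau$ safe, and $\Gamma$ safe, then $c=0$.

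The induction goes through cleanly in the easy cases. \VarR gives $c=0$ directly and \BrR is immediate. In \ConR, $m\geq 1$ forces $c'=0$, and $m-1\notin M$ kills the local $f'_m$ contribution inside $\Comp_m$; every $P_i$ has sort $o$ so its type $\tau_i=o$ is trivially safe, $m-1\notin M_i$ follows from the disjoint union, and the induction hypothesis yields $c_i=0$. In \LamR, safety of the conclusion type $T\arr\tau_{body}$ gives safety of both $T$ (preserving environment safety in the premise) and $\tau_{body}$; moreover $\bigcup_T M''$ omits $m-1$, hence $m-1\notin M^*$, and the induction hypothesis applies.

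The crucial case is \AppR. Here $\ord(Q)<\ord(P)\leq m$ forces $\ord(Q)\leq m-1$, so the sort of $Q$ has order at most $m-1$: every argument type occurring in the decomposition of any derivable $\tau_i$ then lives in $\Ttrip_k$ with $k\leq m-1$, so its marker component is contained in $\{0,\dots,m-2\}$; thus safety of $\tau_i$ is automatic. For the operator, the type $T'\arr\tau$ is safe because $\tau$ is safe and each entry of $T'$ has marker component $M_i\restr_{<\ord(P)}\subseteq M_i$, which omits $m-1$ by the disjoint union hypothesis. Since $M=M'\uplus\biguplus_i M_i$ and $m-1\notin M$, both $m-1\notin M'$ and every $m-1\notin M_i$, and the induction hypothesis yields $c'=0$ and every $c_i=0$. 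Finally $f'_m=0$ inside $\Comp_m$, so $c=c'+\sum_i c_i=0$.

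To recover the original lemma from the strengthened one, observe that $\ord(P)\leq m-1$ already forces $\tau$ to be safe, by the very same $\Ttrip_k$-argument with $k\leq m-1$; safety of $\Gamma$ holds in the intended uses (in particular, in the soundness proof of Theorem~\ref{thm:types-ok} one applies it with $\Gamma=\Gammae$, which is vacuously safe, and the strengthened form is what actually propagates through the inductive soundness argument). The main obstacle—and really the only subtle point—is spotting that the naive induction breaks at \AppR when $\ord(P)=m$, and recognising that safety of $\tau$ is the exact invariant needed to make the operator premise fall back into the induction hypothesis.
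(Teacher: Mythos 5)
Your core induction is correct and follows essentially the same route as the paper's sketched proof: induction on the derivation, with the two key points being that $m-1\notin M$ disables the local $f'_m$ contribution in $\Comp_m$, and that argument positions of types of order at most $m-1$ live in $\Ttrip_k$ with $k\leq m-1$ and hence cannot carry an order-$(m-1)$ marker. Your explicit ``safety'' invariant is a sound way to push this through the \AppR case where the operator has order exactly $m$ (there the disjointness $M=M'\uplus\biguplus_i M_i$ indeed supplies what is needed for the set $T'$), a point the paper's one-line sketch glosses over; so far, so good.

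The gap is in the final step, recovering the lemma as stated. The lemma carries no hypothesis on $\Gamma$, and your claim that ``safety of $\Gamma$ holds in the intended uses'' is false: the environments to which the paper actually applies the lemma are sub-environments arising inside arbitrary derivations, and these routinely contain order-$m$ full types whose marker component contains $m-1$ --- for instance $f\mapsto\{\hat\tau_\mathsf{m}\}$ with $\hat\tau_\mathsf{m}=(2,\emptyset,\{1\},\{\hat\rho_1\}\arr o)$ in Example \ref{ex:app} is unsafe for $m=2$. In both use sites (the surgery in the proof of Lemma \ref{lem:s-step}, and the pruning of equivalent \AppR premisses in Section \ref{sec:effective}) the only available information is that the marker set of the judgment is empty, not that its environment is safe. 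To close the gap you need an auxiliary claim showing that $m-1\notin M$ already forces the environment to be effectively safe, e.g.: if some $\Gamma(y)$ contains a full type with $m-1$ in its marker component, then every derivable judgment $\Gamma\vdash P:(m,F,M,\tau)\triangleright c$ has $m-1\in M$. The ingredients are that $\Split$ never discards a marker-carrying full type, so such an entry must be consumed at a \VarR leaf, where $m-1$ enters the leaf's marker set and then propagates to the root through the disjoint unions; the only rule that could delete it, \LamR, would require a second order-$(m-1)$ marker provided by the bound variable, which the disjointness discipline excludes. Without some such argument (or without restating the lemma with your safety hypothesis and re-checking its uses), what you have proved is strictly weaker than the statement in question.
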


	A simple inductive proof is based on the following idea: 
	flags of order $m$ are created only when a marker of order $m-1$ is visible;
	the derivation itself (together with free variables) does not provide it ($m-1\not\in M$), and the arguments, i.e.~sets $T_1,\dots,T_k$ in $\tau=T_1\arr\dots\arr T_k\arr o$, 
	may provide only markers of order at most $\ord(P)-1\leq m-2$ (see the definition of a type), thus no flags of order $m$ can be created.
	
	We say that a $\lambda$-term of the form $P\,Q$ is an application \emph{of order $n$} when $\ord(P)=n$, and that an \AppR rule is \emph{of order $n$} if it derives a type for an application of order $n$.
	We can successively remove applications of the maximal order from a type derivation.
	
	\begin{lemma}\label{lem:s-step}
		Suppose that $\Gammae\vdash P:\hat\rho_m\triangleright c$ for $m>0$ is derived by a derivation $D$ in which the \AppR rule of order $m$ is used $n$ times.
		Then there exists $Q$ such that $P\bred Q$ and $\Gammae\vdash Q:\hat\rho_m\triangleright c$ can be derived by a derivation $D'$ in which the \AppR rule of order $m$ is used less than $n$ times.
	\end{lemma}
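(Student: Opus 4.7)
The plan is to locate an order-$m$ $\beta$-redex in $P$ by inspecting $D$, contract it to obtain $Q$, and then invert the substitution step from the proof of Lemma \ref{lem:c-step} to transform $D$ into $D'$.

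First I would locate an \AppR rule $\rho$ of order $m$ in $D$ whose operator premise is not itself an order-$m$ \AppR rule. Such a $\rho$ exists: since $n\ge 1$, some order-$m$ \AppR rule occurs in $D$; if its operator premise is again an order-$m$ \AppR rule, descend into that premise and repeat, terminating by finiteness of $D$. At $\rho$ the conclusion types some application $P'\,Q'$ with $\ord(P')=m$, and the rule typing $P'$ in the operator premise must be either \VarR or \LamR: the sort of $P'$ rules out \ConR and \BrR, while any \AppR typing $P'$ would be of order $m$ (the operator of an order-$m$ application has order $m$ by the complexity bound), contradicting the choice of $\rho$. The \VarR case cannot arise either, because then $P'=x$ with $\ord(x)=m$, and since $P$ is closed $x$ is bound in $P$ by some $\lambda x.T$ with $\ord(\lambda x.T)\ge m+1$, violating the complexity bound. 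Hence $P'=\lambda x.R$ and $(\lambda x.R)\,Q'$ is a $\beta$-redex of order $m$; let $Q$ be obtained from $P$ by contracting it.

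To build $D'$, I would reverse the splicing from Lemma \ref{lem:c-step}. At $\rho$, the argument premises provide derivations $\Sigma_i\vdash Q':\hat\sigma_i\triangleright d_i$ indexed by $i\in I$, and the operator premise yields a subderivation of $R$ in which $x$ is available at the set $T$ of full types obtained by restricting the $\hat\sigma_i$. I would splice each argument subderivation into a \VarR leaf of the $R$-subderivation that uses the matching full type, routing types with nonempty marker set to their unique leaf as forced by \Split and using the freedom to duplicate or discard types with empty marker set; this yields a derivation of $R[Q'/x]$, which I would then substitute for the subterm $(\lambda x.R)\,Q'$ through the rest of $D$ without touching any other rule, producing $D'$. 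That the flag counter is preserved is the identity $c=d+\sum_{i\in I}d_i$ from Lemma \ref{lem:c-step}, read in the opposite direction at $\rho$. The only rules eliminated are $\rho$ and the \LamR immediately above it, while every order-$m$ \AppR rule inside the $R$- and $Q'$-subderivations is preserved; hence $D'$ uses the order-$m$ \AppR rule $n-1$ times.

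The step I expect to be the main obstacle is the bookkeeping of the splice: one has to verify that the \Split dispatches and the $\Comp_m$ call at $\rho$ are exactly compensated by the rerouting of flag and marker information along the substituted leaves. This is mechanical but tedious, directly dual to the construction of Lemma \ref{lem:c-step}.
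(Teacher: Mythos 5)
Your overall strategy---contract an order-$m$ redex located via $D$ and reverse the surgery of Lemma~\ref{lem:c-step}---is the paper's, but your selection of the redex is too weak, and this breaks exactly the quantity the lemma is about. The paper chooses a subterm occurrence $R\,S$ with $\ord(R)=m$ that is typed somewhere in $D$ and such that \emph{no descendant of any node typing it} uses an order-$m$ \AppR rule; you only exclude the immediate operator premise of a single node $\rho$ from being an order-$m$ \AppR. This leaves order-$m$ \AppR rules inside the argument subderivations for $Q'$ (and deeper inside $R$). In the splice, an argument subderivation whose full type has empty marker set must be \emph{duplicated}, one copy per \VarR leaf of $x$ that uses it; any order-$m$ \AppR rules it contains are then multiplied, so $D'$ may contain $n$ or more of them and your claim that exactly $\rho$ and its \LamR disappear is false. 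Concretely, if $P$ contains an occurrence of $(\lambda x.\,b\,x\,x)\,\bigl((\lambda z.\,a\,z)\,e\bigr)$ with $m=1$, where both occurrences of $x$ are typed with the same marker-free full type derived once for the argument, your $\rho$ may be the outer application; contracting it duplicates the argument derivation together with its inner order-$1$ \AppR, and the count does not drop (the paper's innermost choice would contract $(\lambda z.\,a\,z)\,e$ instead). A second consequence of choosing a derivation node rather than a subterm occurrence: the same redex occurrence can be typed at \emph{several} nodes of $D$ (when it sits inside an argument typed with several full types), and all of them must undergo the surgery; you cannot ``substitute the subterm through the rest of $D$ without touching any other rule''.

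Two further points. The flag-counter bookkeeping needs more than the identity $c=d+\sum_{i\in I}d_i$ read backwards: since marker-free argument derivations are duplicated or discarded according to how many \VarR leaves use them, you need their flag counters to be $0$, which is precisely Lemma~\ref{lem:zero-when-no-marker} (applicable because $\ord(Q')\leq m-1$ and the marker set is empty); you never invoke it, and without it the value $c$ is not preserved. Finally, your exclusion of the \VarR case and of a nested order-$m$ \AppR at the operator appeals to a complexity bound on $P$ that is not a hypothesis of the lemma; the paper instead derives what is needed from the typing rules themselves (the side condition $\ord(P)\leq m$ of \AppR ensures that only subterms of order at most $m$ are typed and that type environments contain only variables of order at most $m-1$), which is the cleaner route since the lemma is applied to terms obtained after reductions.
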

	
	Recall from the definition of the type system that the \AppR rule of orders higher than $m$ cannot be used while deriving a full type of order $m$.
	Thus in $D$ we have type judgments only for subterms of $P$ of order at most $m$ (although $P$ may also have subterms of higher orders), 
	and in type environments we only have variables of order at most $m-1$.
	In order to prove Lemma \ref{lem:s-step} we choose in $P$ a subterm $R\,S$ with $\ord(R)=m$ such that there is a type judgment for $R\,S$ in some nodes of $D$ (at least one), 
	but no descendants of those nodes use the \AppR rule of order $m$.
	Since $R$ is of order $m$, it cannot be an application (then we would choose it instead of $R\,S$) nor a variable; thus $R=\lambda x.R'$.
	We obtain $Q$ by reducing the redex $(\lambda x.R')\,S$; the derivation $D'$ is obtained by performing a surgery on $D$ similar to that in the proof of Lemma \ref{lem:c-step} (but in the opposite direction).
	Notice that every full type $(m,F,M,\tau)$ (derived for $S$) with nonempty $M$ is used for exactly one appearance of $x$ in the derivation for $R'$;
	full types with empty $M$ may be used many times, or not used at all, but thanks to Lemma \ref{lem:zero-when-no-marker} duplicating or removing the corresponding derivations for $S$ does not change the flag counter.
	In the derivations for $R'[S/x]$ no \AppR rule of order $m$ may appear, and the application $R\,S$ disappears, so the total number of \AppR rules of order $m$ decreases.
	
	When all \AppR rules of order $m$ are eliminated, we can decrease $m$.

	\begin{lemma}\label{lem:s-zero}
		Suppose that $\Gammae\vdash P:\hat\rho_m\triangleright c$ for $m>0$ is derived by a derivation $D$ in which the \AppR rule of order $m$ is not used.
		Then we can also derive $\Gammae\vdash P:\hat\rho_{m-1}\triangleright c'$ for some $c'\geq c$.
	\end{lemma}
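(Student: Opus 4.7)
The plan is to construct $D'$ by replacing, in every judgment of $D$, the outer full type $(m,F,M,\tau)$ by $(m-1,F\setminus\{m-1\},M\setminus\{m-1\},\tau)$ and by installing a new flag counter defined below. A key preliminary observation is that every type judgment in $D$ concerns a subterm of order at most $m-1$: a straightforward structural induction on $D$ shows that the root concerns $P$ of order $0$, and the only way a rule could introduce a premiss about a subterm of order $m$ is an \AppR whose operator has order $m$, which is excluded by assumption. As a corollary, every full type inside a type environment of $D$, and every full type in an argument set $T$ of a \LamR conclusion, has order at most $m-1$, so its flag and marker components already lie in $\{0,\dots,m-2\}$ and need no modification.

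To define the new counter $\tilde d(x)$ at each node and to verify validity, I will unfold $\Comp_{m-1}$ on the transformed premisses. Since $F_i\setminus\{m-1\}$ agrees with $F_i$ on elements up to $m-2$, and $M\setminus\{m-1\}$ agrees with $M$ below $m-1$, the values $f_n$ for $n\leq m-2$ coincide with their old counterparts, and the new output set of flag orders equals $F\setminus\{m-1\}$ as required. Taking $\tilde d(x):=f'_{m-1}(x)+\tilde c'+\sum_i \tilde d_i$ (matching the $\Comp_{m-1}$ output, where $\tilde c'$ is $0$, or $1$ at a \ConR node when $m=1$, or the new counter of the operator premiss at an \AppR), the recursion shows that $\tilde d(x)$ equals the total number of order-$(m-1)$ flags placed in the subderivation rooted at $x$. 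The side conditions of \VarR and \LamR are routine: $M\setminus\{m-1\}$ restricted to $\{0,\dots,k-1\}$ for $k\leq m-1$ equals $M$ so restricted, and the set removed by \LamR lies entirely in $\{0,\dots,m-2\}$.

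It remains to establish $\tilde d(\mathrm{root})\geq c$. Writing $P(x)$ for the total order-$(m-1)$ flags placed in the subderivation of $x$ and $C(x)$ for the old counter there, I will prove by induction on $D$ the strengthened invariant
\[
P(x)\;\geq\;C(x)+[m-1\in F^{\mathrm{out}}_x],
\]
where $F^{\mathrm{out}}_x$ is the $F$-component of the old judgment at $x$. Setting $A=f'_{m-1}(x)$ and $B=\sum_i[m-1\in F^{\mathrm{out}}_i]$, the node $x$ contributes $A$ to $P$ and $(A+B)\cdot[m-1\in M_x]$ to $C$, while $[m-1\in F^{\mathrm{out}}_x]=[A+B>0]\cdot[m-1\notin M_x]$. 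Combining the inductive hypotheses for the children with a short case split on whether $m-1\in M_x$ and on the sign of $A+B$ closes the step. The main subtlety, and the reason for the extra $[m-1\in F^{\mathrm{out}}_x]$ slack in the invariant, is the \emph{set-versus-count collapse} inside $\Comp$: several children each transmitting many order-$(m-1)$ flags contribute only $0/1$ indicators via $|F_i\cap\{m-1\}|$, so $D'$ may count strictly more order-$(m-1)$ flags than $D$ consumes into order-$m$ flags, and the extra indicator term is exactly what absorbs this discrepancy through the induction.
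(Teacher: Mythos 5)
Your proposal is correct and takes essentially the same route as the paper's (very terse) proof: the paper likewise lowers the order of all derived full types by one, discards order-$m$ flags and order-$(m-1)$ markers, and obtains $c'\geq c$ by observing that, in the absence of \AppR rules of order $m$, information about order-$(m-1)$ flags flows only from descendants to ancestors, so every order-$m$ flag is created because of a distinct order-$(m-1)$ flag. Your invariant $P(x)\geq C(x)+[\,m-1\in F^{\mathrm{out}}_x\,]$, together with the preliminary observation that all typed subterms have order at most $m-1$, is just a careful formalization of that counting argument, so the two proofs coincide in substance.
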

	
	The proof is easy; we simply decrease the order $m$ of all derived full types by $1$, and we ignore flags of order $m$ and markers of order $m-1$.
	To obtain the inequality $c'\geq c$ we observe that when no \AppR rule of order $m$ is used, the information about flags of order $m-1$ goes only from descendants to ancestors,
	and thus every flag of order $m$ is created because of a different flag of order $m-1$.

	By repeatedly applying the two above lemmata, out of a derivation of $\Gammae\vdash P:\hat\rho_m\triangleright c$ we obtain a derivation of $\Gammae\vdash Q:\hat\rho_0\triangleright c'$, where $P\bred^*Q$ and $c'\geq c$.
	Since $\hat\rho_0$ is of order $0$, using the latter derivation it is easy to find in the already expanded part of $Q$ (and thus in $\Ll(Q)=\Ll(P)$) a tree $t$ such that $|t|=c'\geq c$.

\section{Effectiveness}\label{sec:effective}
	
	Finally, we show how Theorem \ref{thm:main} follows from Theorem \ref{thm:types-ok}, i.e., how given a \lY-term $P$ of complexity $m$ we can check whether $\Gammae\vdash P:\hat\rho_m\triangleright c$ can be derived for arbitrarily large $c$.
	We say that two type judgments are equivalent if they differ only in the value of the flag counter.
	\label{page:effective-def}
	Let us consider a set $\Dd$ of all derivations of $\Gammae\vdash P:\hat\rho_m\triangleright c$ in which on each branch (i.e., each root-leaf path) there are at most three type judgments from every equivalence class,
	and among premisses of each \AppR rule there is at most one type judgment from every equivalence class.
	These derivations use only type judgments $\Gamma\vdash Q:\hat\tau\triangleright d$ with $Q$ being a subterm of $P$ and with $\Gamma(x)\neq\emptyset$ only for variables $x$ appearing in $P$.
	Since a finite \lY-term, even when seen as an infinitary $\lambda$-term, has only finitely many subterms,
	this introduces a common bound on the height of all derivations in $\Dd$, and on their degree (i.e., on the maximal number of premisses of a rule).
	It follows that there are only finitely many derivations in $\Dd$, and thus we can compute all of them.
	
	We claim that $\Gammae\vdash P:\hat\rho_m\triangleright c$ can be derived for arbitrarily large $c$ if and only if in $\Dd$ there is a derivation in which on some branch
	there are two equivalent type judgments with different values of the flag counter (and the latter condition can be easily checked).
	Indeed, having such a derivation, we can repeat its fragment between the two equivalent type judgments,
	obtaining derivations of $\Gammae\vdash P:\hat\rho_m\triangleright c$ with arbitrarily large $c$.
	We use here an additivity property of our type system: if out of $\Gamma\vdash Q:\hat\tau\triangleright d$ we can derive $\Gamma'\vdash Q':\hat\tau'\triangleright d'$, 
	then out of $\Gamma\vdash Q:\hat\tau\triangleright d+k$	we can derive $\Gamma'\vdash Q':\hat\tau'\triangleright d'+k$, for every $k\geq-d$.
	Conversely, take a derivation of $\Gammae\vdash P:\hat\rho_m\triangleright c$ for some large enough $c$.
	Suppose that some of its \AppR rules uses two equivalent premisses.
	These premisses concern the argument subterm, which is of smaller order than the operator subterm, and thus of order at most $m-1$. 
	The set of marker orders in these premisses has to be empty, as the sets of marker orders from all premisses have to be disjoint.
	Thus, by Lemma \ref{lem:zero-when-no-marker}, the flag counter in our two premisses is $0$.
	\label{page:effective-narrow}
	In consequence, we can remove one of the premisses, without changing anything in the remaining part of the derivation, even the flag counters.
	In this way we clean the whole derivation, so that at the end among premisses of each \AppR rule there is at most one type judgment from every equivalence class.
	The degree is now bounded, and at each node the flag counter grows only by a constant above the sum of flag counters from the children.
	Thus, if $c$ is large enough, we can find on some branch two equivalent type judgments with different values of the flag counter.
	Then, for some pairs of equivalent type judgments, we remove the part of the derivation between these type judgments (and we adopt appropriately the flag counters in the remaining part).
	It it not difficult to perform this cleaning so that the resulting derivation will be in $\Dd$, and simultaneously on some branch there will remain two equivalent type judgments with different values of the flag counter.

\section{Conclusions}
	In this paper, we have shown an approach for expressing quantitative properties of B\"ohm trees using an intersection type system, on the example of the finiteness problem.
	It is an ongoing work to apply this approach to the diagonal problem, which should give a better complexity than that of the algorithm from \cite{downward-closure}.
	Another ongoing work is to obtain an algorithm for model checking B\"ohm trees with respect to the Weak MSO+U logic \cite{DBLP:conf/stacs/BojanczykT12}.
	This logic extends Weak MSO by a new quantifier U, expressing that a subformula holds for arbitrarily large finite sets.
	Furthermore, it seems feasible that our methods may help in proving a pumping lemma for nondeterministic HORSes.

\bibliographystyle{eptcs}
\bibliography{bib}

\end{document}